\newcommandx{\set}[2][1=1]{\ensuremath{\{#1,\ldots,#2\}}}
\newcommandx{\tlog}[3][1=,3=]{\log_{#1}^{#3}(#2)}
  \newtheorem{theorem}{Theorem}
  \newtheorem{lemma}{Lemma}
  \newtheorem{proposition}{Proposition}
  \newtheorem{observation}{Observation}
  \newtheorem{fact}{Fact}
  \theoremstyle{definition}
  \newtheorem{problem}{Problem}
  \newtheorem{question}{Question}
  \newtheorem{definition}{Definition}
  \newtheorem{construction}{Construction}
\crefname{observation}{Observation}{Observations}
\crefname{rrule}{Reduction Rule}{Reduction Rules}
\crefname{construction}{Construction}{Constructions}
\Crefname{proposition}{Prop.}{Props.}
\crefname{proposition}{Proposition}{Propositions}
\Crefname{theorem}{Thm.}{Thm.}
\crefname{theorem}{Theorem}{Theorems}
\Crefname{lemma}{Lem.}{Lems.}
\crefname{lemma}{Lemma}{Lemmas}
\Crefname{corollary}{Cor.}{Cors.}
\crefname{corollary}{Corollary}{Corollaries}
\Crefname{fact}{Fact}{Facts}
\crefname{fact}{Fact}{Facts}
  \crefname{figure}{Figure}{Figures}
\newcommand{\calQ}{\mathcal{Q}}
\crefname{problem}{Problem}{Problems}
\Crefname{problem}{Prob.}{Probs.}
\newcommandx{\decprob}[6][3=Input,5=Question]{
  \begingroup
  \par\noindent\nopagebreak[4]
  \begin{problem}\label{prob:#2}\vspace{-0.5em}\colorbox{gray!17!white}{\textsc{#1}\index{problem!#1}}\nopagebreak[4]\end{problem}\nopagebreak[4]\vspace{-0.5em}
  \par\noindent\hangindent=\parindent\textbf{#3}:  #4\nopagebreak[4]
  \par\noindent\hangindent=\parindent\textbf{#5}:  #6
  \par\bigskip
  \endgroup
}
\newcommand{\N}{\mathbb{N}}
\newcommand{\Nzero}{\mathbb{N}_0}
\newcommand{\prob}[1]{\textnormal{\textsc{#1}}}
\newcommand{\fvsTsc}{\prob{Feedback Vertex Set}}
\newcommand{\tcTsc}{\prob{3-Coloring}}
\newcommand{\tcAcr}{\prob{3-Col}}
\newcommand{\isTsc}{\prob{Independent Set}}
\newcommand{\dsTsc}{\prob{Dominating Set}}
\newcommand{\cocl}[1]{\ensuremath{\operatorname{#1}}}
\newcommand{\NP}{\cocl{NP}}
\newcommand{\cqed}{\hfill$\diamond$}
\newcommand{\tref}[1]{(\Cref{#1})}
\newcommand{\ceq}{\ensuremath{\coloneqq}}
\newcommand{\lqed}{}
\newcommand{\ceil}[1]{\lceil#1\rceil}
\newcommand{\CS}{\{1,2,3\}}
\newcommand{\HC}{\mathcal{C}}
\newcommand{\ExternalLink}{%
  \tikz[x=1.2ex, y=1.2ex, baseline=-0.05ex]{%
    \begin{scope}[x=1ex, y=1ex]
        \clip (-0.1,-0.1) 
            --++ (-0, 1.2) 
            --++ (0.6, 0) 
            --++ (0, -0.6) 
            --++ (0.6, 0) 
            --++ (0, -1);
        \path[draw, 
            line width = 0.5, 
            rounded corners=0.5] 
            (0,0) rectangle (1,1);
    \end{scope}
    \path[draw, line width = 0.5] (0.5, 0.5) 
        -- (1, 1);
    \path[draw, line width = 0.5] (0.6, 1) 
        -- (1, 1) -- (1, 0.6);
  }
}
\newcommand{\tikzpramble}{%
  \def\teps{0.33}
  \def\nsc{0.6}
  \tikzstyle{xnode}=[circle,scale=\nsc,draw,fill=white];
  \tikzstyle{xnodeA}=[circle,scale=\nsc,fill=white,draw=red];
  \tikzstyle{xnodeB}=[circle,scale=\nsc,fill=lightgray,draw=green];
  \tikzstyle{xnodeC}=[circle,scale=\nsc,fill=black,draw=blue];
  \tikzstyle{xnodex}=[circle,fill,scale=\nsc,draw];
  \tikzstyle{xnodey}=[diamond,fill,scale=\nsc,draw];
  \tikzstyle{xedge}=[thick,-];
  \tikzstyle{xedgex}=[thick,-,dashed];
  \tikzstyle{xedgedot}=[thick,-,dotted];
  
  \tikzstyle{xpath}=[color=blue,opacity=0.15,line cap=round,line width=6pt];
  \tikzstyle{xpathx}=[color=magenta,opacity=0.15,line cap=round,line width=6pt];
  \tikzstyle{xpathy}=[color=green,opacity=0.15,line cap=round,line width=6pt];
  
  \tikzstyle{xhili}=[circle,scale=1.25,opacity=0.25,fill,color=orange,draw];
  \tikzstyle{xhiliIS}=[circle,scale=1.25,opacity=0.25,fill,color=magenta,draw];
  \tikzstyle{xxhili}=[scale=1.25,opacity=0.25,fill,color=cyan,draw];
}
\newcommandx{\theLad}[5][2=-1,5=1]{
     \node (x#1) at (0+#3*\xr,0+#4*\yr)[xnode]{};
     \node (#1x1) at (#5*0.5*\xr+#3*\xr,#2*1*\yr+#4*\yr)[xnode]{};
     \node (#1x2) at (#5*1*\xr+#3*\xr,#2*0.5*\yr+#4*\yr)[xnode]{};
     \node (#1x3) at (#5*1.5*\xr+#3*\xr,#2*1*\yr+#4*\yr)[xnode]{};
     \node (#1x4) at (#5*2*\xr+#3*\xr,#2*0.5*\yr+#4*\yr)[xnode]{};
     \node (#1x5) at (#5*2.5*\xr+#3*\xr,#2*1*\yr+#4*\yr)[xnode]{};
     \node (#1x6) at (#5*3*\xr+#3*\xr,#2*0.5*\yr+#4*\yr)[xnode]{};
     \node (y#1) at (#5*3.5*\xr+#3*\xr,0*\yr+#4*\yr)[xnode]{};
     \foreach \x/\y in {x#1/#1x1,x#1/#1x2,#1x1/#1x2,#1x1/#1x3,#1x2/#1x3,#1x2/#1x4,#1x3/#1x4,#1x3/#1x5,#1x4/#1x5,#1x4/#1x6,#1x5/#1x6,#1x6/y#1}
     {\draw[xedge] (\x) to (\y);}
     
}
\newcommand{\theH}[2]{%
  \theLad{a}{0+#1*\xr}{0+#2*\yr}
  \theLad{b}[1]{0+#1*\xr}{-3*\yr+#2*\yr}
  \foreach\x\y in{ax1/bx1,ax5/bx5,ax6/bx6}{\draw[xedge] (\x) to (\y);}
}
\newcommand{\theHcolor}{%
  \foreach\x\y in{xa/xnodeA,ax1/xnodeB,ax2/xnodeC,ax3/xnodeA,ax4/xnodeB,ax5/xnodeC,ax6/xnodeA,xb/xnodeB,bx1/xnodeA,bx2/xnodeC,bx3/xnodeB,bx4/xnodeA,bx5/xnodeC,bx6/xnodeB}
     {\node at (\x)[\y]{};}
}
\newcommand{\theHpath}{%
  \draw[xpath] (yb) to (bx6) to (ax6) to (ya);
  \draw[xpath] (xb) to (bx2) to (bx4) to (bx5) to (bx3) to (bx1) to (ax1) to (ax3) to (ax5) to (ax4) to (ax2) to (xa);
}
\newcommand{\theS}[2]{%
  \theLad{a}{0}{0}
  \theLad{b}[1]{3.5}{-3}[-1]
  \node (ab1) at (0.5*\xr,-1.5*\yr)[xnode]{};
  \node (ab2) at (3*\xr,-1.5*\yr)[xnode]{};
  \foreach\x\y in{ab1/ax1,ab1/bx6,ab1/ax5,ab1/ab2,ab2/bx1,ab2/ax6,ab2/bx5}{\draw[xedge] (\x) to (\y);}
}
\newcommand{\theScolor}{%
  \foreach\x\y in{xa/xnodeA,ax1/xnodeB,ax2/xnodeC,ax3/xnodeA,ax4/xnodeB,ax5/xnodeC,ax6/xnodeA,xb/xnodeB,bx1/xnodeA,bx2/xnodeC,bx3/xnodeB,bx4/xnodeA,bx5/xnodeC,bx6/xnodeB,ab1/xnodeA,ab2/xnodeB}
     {\node at (\x)[\y]{};}
}
\newcommand{\theSpath}{%
  \draw[xpath] (xb) to (bx1) to (bx2) to (bx3) to (bx4) to (bx5) to (ab2) to (ax6) to (ya);
     \draw[xpath] (xa) to (ax1) to (ax2) to (ax3) to (ax4) to (ax5) to (ab1) to (bx6) to (yb);
}
\newcommand{\theDailey}[3]{%
     \node (#1d1) at (0+#2*\xr,0)[xnode]{};
     \node (#1c1) at (0+#2*\xr,-0.5*\yr+#3*\yr)[xnode]{};
     \node (#1c2) at (-0.5*\xr+#2*\xr,0*\yr+#3*\yr)[xnode]{};
     \node (#1c3) at (-0.25*\xr+#2*\xr,0.5*\yr+#3*\yr)[xnode]{};
     \node (#1c4) at (0.25*\xr+#2*\xr,0.5*\yr+#3*\yr)[xnode]{};
     \node (#1c5) at (0.5*\xr+#2*\xr,0*\yr+#3*\yr)[xnode]{};
     \draw[xedge] (#1c3) to (#1c2) to (#1c1) to (#1c5) to (#1c4);
     \foreach\x in{1,...,5}{\draw[xedge] (#1d1) to (#1c\x);}
     \node (#1b1) at (0*\xr+#2*\xr,-1*\yr+#3*\yr)[xnode]{};
     \node (#1b2) at (-0.75*\xr+#2*\xr,-0.75*\yr+#3*\yr)[xnode]{};
     \node (#1b3) at (-1*\xr+#2*\xr,-0*\yr+#3*\yr)[xnode]{};
     \node (#1b4) at (-0.75*\xr+#2*\xr,0.5*\yr+#3*\yr)[xnode]{};
     \node (#1b5) at (-0.5*\xr+#2*\xr,1*\yr+#3*\yr)[xnode]{};
     \node (#1b6) at (0*\xr+#2*\xr,1*\yr+#3*\yr)[xnode]{};
     \node (#1b7) at (0.5*\xr+#2*\xr,1*\yr+#3*\yr)[xnode]{};
     \node (#1b8) at (0.75*\xr+#2*\xr,0.5*\yr+#3*\yr)[xnode]{};
     \node (#1b9) at (1*\xr+#2*\xr,-0*\yr+#3*\yr)[xnode]{};
     \node (#1b10) at (0.75*\xr+#2*\xr,-0.75*\yr+#3*\yr)[xnode]{};
     \foreach\x in{1,...,9}{\pgfmathsetmacro\yx{int(\x + 1)};\draw[xedge] (#1b\x) to (#1b\yx);}
     \draw[xedge] (#1b10) to (#1b1);
     \draw[xedge] (#1b1) to (#1b3);
     \draw[xedge] (#1b1) to (#1c1) to (#1b10);
     \draw[xedge] (#1b3) to (#1c2) to (#1b4);
     \draw[xedge] (#1b4) to (#1c3) to (#1b5);
     \draw[xedge] (#1c3) to (#1b6) to (#1c4);
     \draw[xedge] (#1b7) to (#1c4) to (#1b8);
     \draw[xedge] (#1b8) to (#1c5) to (#1b10);
     \node (#1a1) at (0*\xr+#2*\xr,-1.75*\yr+#3*\yr)[xnode]{};
     \node (#1a2) at (-1.75*\xr+#2*\xr,-1.25*\yr+#3*\yr)[xnode]{};
     \node (#1a3) at (-1.5*\xr+#2*\xr,0*\yr+#3*\yr)[xnode]{};
     \node (#1a4) at (-1.75*\xr+#2*\xr,1.25*\yr+#3*\yr)[xnode]{};
     \node (#1a5) at (0*\xr+#2*\xr,1.5*\yr+#3*\yr)[xnode]{};
     \node (#1a6) at (0*\xr+#2*\xr,2*\yr+#3*\yr)[xnode]{};
     \node (#1a7) at (1.*\xr+#2*\xr,1.*\yr+#3*\yr)[xnode]{};
     \node (#1a8) at (1.75*\xr+#2*\xr,0*\yr+#3*\yr)[xnode]{};
     \node (#1a9) at (1.*\xr+#2*\xr,-1.*\yr+#3*\yr)[xnode]{};
     \foreach\x in{1,...,8}{\pgfmathsetmacro\yx{int(\x + 1)};\draw[xedge] (#1a\x) to (#1a\yx);}
     \draw[xedge] (#1a1) to (#1a9);
     \foreach\x in{#1a1,#1a2,#1a3}{\draw[xedge] (#1b2) to (\x);}
     \foreach\x in{#1b3,#1b4}{\draw[xedge] (#1a3) to (\x);}
     \foreach\x in{#1a4,#1a5}{\draw[xedge] (#1b5) to (\x);}
     \foreach\x in{#1b6,#1b7}{\draw[xedge] (#1a5) to (\x);}
     \foreach\x in{#1b7,#1b8,#1b9}{\draw[xedge] (#1a7) to (\x);}
     \draw[xedge] (#1a8) to (#1b9);
     \foreach\x in{#1b9,#1b10,#1b1}{\draw[xedge] (#1a9) to (\x);}
     \draw[xedge] (#1a2) to (#1a4);
     \draw[xedge] (#1a4) to (#1a6);
     \draw[xedge] (#1a6) to [out=0,in=90](#1a8);
     \draw[xedge] (#1a1) to [out=0,in=-90](#1a8);
}
\newcommand{\theDD}[2]{%
  \begin{scope}[xscale=-1]
    \theDailey{A}{0+#1*\xr}{0+#2*\yr}
    \node[below =of Aa1] (Ax) [xnode]{};
    \draw[xedge] (Aa1) to (Ax);
  \end{scope}
  \theDailey{B}{5*\xr+#1*\xr}{0+#2*\yr}
  \node[below =of Ba1] (Bx) [xnode]{};
  \draw[xedge] (Ba1) to (Bx);
  \draw[xedge] (Aa2) to (Ba2);
  \draw[xedge] (Aa6) to (Ba6);
}
\newcommand{\theDDx}[2]{%
  \begin{scope}[xscale=-1]
    \theDailey{A}{0+#1*\xr}{0+#2*\yr}
  \end{scope}
  \theDailey{B}{5*\xr+#1*\xr}{0+#2*\yr}
  \draw[xedge] (Aa2) to (Ba2);
  \draw[xedge] (Aa6) to (Ba6);
}
\newcommand{\theDDpathS}[2]{
  \draw[xpath] (#1a1) to (#1b2) to (#1a2) to (#1a3) to (#1a4) to (#1a5) to (#1b5) to (#1b4) to (#1b3) to (#1b1) to (#1a9) to (#1b10)
  to (#1c1) to (#1c2) to (#1c3) to (#1b6) to (#1b7) to (#1c4) to (#1d1) to (#1c5) to (#1b8) to (#1a7) to (#1b9) to (#1a8) to [out=90,in=#2](#1a6);
}
\newcommand{\theDDpath}{
  \theDDpathS{A}{180}
  \theDDpathS{B}{0}
  \draw[xpath] (Aa6) to (Ba6);
}
\newcommand{\theEllipse}[3]{
  \draw[fill=white,opacity=0.7] (0,0) ellipse (#1*1 and #2*1);
  \draw (0,0) ellipse (#1*1 and #2*1);
  \foreach\x in {0,...,5}{\node (#3a\x) at ($(0,0)+(180/5*\x:#1*1 and #2*1)$)[xnode]{};}
  \foreach\x in {1,...,2}{\node (#3b\x) at ($(0,0)+(-60*\x:#1*1 and #2*1)$)[xnode]{};}
  \node (#3cd) at ($(0,0)+(270:#1*1 and #2*1)$)[fill=white]{$\cdots$};
  \draw[draw=none] (0,0) ellipse (#1*1*0.75 and #2*1*0.75);
  \foreach\x in {0,...,19}{\node (#3c\x) at ($(0,0)+(18*\x:#1*1*0.75 and #2*1*0.75)$)[inner sep=0pt]{};}
  \draw[xedgex] (#3a0) to (#3c0);
  \draw[xedgex] (#3a0) to (#3c1);
  \draw[xedgex] (#3a1) to (#3c2);
  \draw[xedgex] (#3a1) to (#3c3);
  \draw[xedgex] (#3a2) to (#3c4);
  \draw[xedgex] (#3a2) to (#3c5);
  \draw[xedgex] (#3a3) to (#3c6);
  \draw[xedgex] (#3a3) to (#3c7);
  \draw[xedgex] (#3a4) to (#3c8);
  \draw[xedgex] (#3a4) to (#3c9);
  \draw[xedgex] (#3a5) to (#3c10);
  \draw[xedgex] (#3a5) to (#3c11);
  \draw[xedgex] (#3b2) to (#3c12);
  \draw[xedgex] (#3b2) to (#3c13);
  \draw[xedgex] (#3b1) to (#3c17);
  \draw[xedgex] (#3b1) to (#3c18);
  
}
\newcommand{\mytitle}{3-Coloring on Regular, Planar, and Ordered Hamiltonian~Graphs}
\newcommand{\myabstract}{%
We prove that \tcTsc{}
remains \NP-hard on 4- and 5-regular planar Hamiltonian graphs,
strengthening the results of 
Dailey~[Disc.~Math.'80]
and 
Fleischner and Sabidussi~[J.\ Graph.\ Theor.'02].
Moreover,
we prove that \tcTsc{} remains \NP-hard on~$p$-regular Hamiltonian graphs for every~$p\geq 6$ and~$p$-ordered regular Hamiltonian graphs for every~$p\geq 3$.
}
  \def\abstractname{Abstract.}
  \renewenvironment{abstract}{%
      \if@twocolumn
        \section*{\abstractname}%
      \else
        \small
        \quotation
	\noindent{\bfseries\abstractname}%
      \fi}
      {\if@twocolumn\else\endquotation\fi}
  \title{\Large \bf \mytitle}
  \author{Dario~Cavallaro \and Till Fluschnik\footnote{Supported by DFG, project TORE (NI/369-18).}}
  \date{\small Technische Universität Berlin, Faculty~IV,\\ Algorithmics and Computational Complexity, Germany.\\\texttt{cavallaro@campus.tu-berlin.de,till.fluschnik@tu-berlin.de}}
\begin{document}
  \maketitle
\begin{abstract}
\myabstract{}

\medskip
\noindent
\emph{Keywords.}
Hamiltonian cycle, 
NP-hardness, 
2-factor, 
Hamiltonian-ordered
\end{abstract}
\section{Introduction}
We study the computational complexity 
of the following 
classic \NP-complete
problem~\cite{Karp72} on Hamiltonian graphs 
(graphs
admitting a cycle that visits every vertex
exactly once)
that are additionally
restricted to be planar, 
regular,
and ordered.

\decprob{\tcTsc{} (\tcAcr{})}{tc}
{An undirected graph~$G=(V,E)$.}
{Is $G$ 3-colorable?}

\noindent
\cref{fig:results} gives an overview of our results.
\begin{figure}[t!]
 \centering
  \begin{tikzpicture}
 
    \usetikzlibrary{patterns,backgrounds,shapes}
    \usetikzlibrary{calc}

      \def\yr{1}
      \def\xr{1}
      \def\boxw{2.9}
    \def\boxh{1.5}

    \def\colNP{orange!50!red!18!white}
    \def\colP{green!20!white}
    \def\colOpen{blue!20!white}
      \def\fsres{\footnotesize}
      \def\fsresx{\scriptsize}

    \def\bwA{1}

    \tikzstyle{xarc}=[->,gray,thick,>=latex,rounded corners]
    
    \newcommand{\gbox}[6]{
      \node (a#1) at (#2)[rectangle, rounded corners, minimum width=\xr*\boxw cm, minimum height=\yr*\boxh cm,text width=\xr*\boxw cm,fill=white,very thick,draw,align=center]{#3#4};
    }

    \def\ysh{1.9}
    \def\xshA{1.625}
    \def\xshB{4.9}
    \def\teps{0.1}

    \newcommand{\theDiagram}{
      \gbox{h}{0.0*\xr,-0.5*\yr}{Hamiltonian}{}{\colNP};
      \gbox{conn}{-\xshB*\xr,-0.5*\ysh*\yr}{$(p-1)$-connected Hamiltonian}{}{\colNP};
      \gbox{3reg}{\xshB*\xr,-\ysh*\yr}{${\leq3}$-regular Hamiltonian}{ \cite{Brooks41}}{\colP};
      \gbox{reg}{-\xshA*\xr,-1.25*\ysh*\yr}{$p$-regular Hamiltonian, $p\geq 6$}{ \tref{thm:tc:kreg}}{\colNP};
      \gbox{ord}{-\xshB*\xr,-1.5*\ysh*\yr}{$p$-ordered  Hamiltonian$^*$}{ \tref{thm:tc:pordham}}{\colNP};
      \gbox{plan}{\xshA*\xr,-1.25*\ysh*\yr}{planar Hamiltonian}{}{\colNP};
      \gbox{ordham}{-\xshB*\xr,-2.5*\ysh*\yr}{$p$-Hamiltonian-ordered}{}{\colNP};
      \gbox{plan3reg}{\xshB*\xr,-2.5*\ysh*\yr}{planar ${\leq3}$-regular Hamiltonian}{}{\colP};
      \gbox{plan5reg}{-\xshA*\xr,-2.5*\ysh*\yr}{planar 5-regular Hamiltonian}{ \tref{thm:tc:5regplanham}}{\colNP};
      \gbox{plan4reg}{\xshA*\xr,-2.5*\ysh*\yr}{planar 4-regular Hamiltonian}{ \tref{thm:tc:4regplanaHam}}{\colNP};
      \draw[xarc] (aord) to node[midway,right]{\footnotesize\cite{NgS97}}(aconn);
      \draw[xarc] (aordham) to node[midway,right]{
      }(aord);
      \draw[xarc] (aplan3reg) to (aplan);
      \draw[xarc] (aplan3reg) to (a3reg);
      \draw[xarc] (aplan4reg) to (aplan);
      \draw[xarc] (aplan5reg) to (aplan);
      
      \draw[xarc] (a3reg) to (ah);
      \draw[xarc] (areg) to (ah);
      \draw[xarc] (aconn) to (ah);
      \draw[xarc] (aplan) to (ah);
    }
    \theDiagram{}
    \draw[rounded corners,dashed,fill=\colNP] ($(ah.north west)+(-\teps,+\teps)$) 
    -- ($(ah.west|-aconn.north)+(-\teps,+\teps)$)
    -- ($(aconn.north west)+(-\teps,+\teps)$)
    -- ($(aord.south west)+(-\teps,-\teps)$)
    -- ($(aplan5reg.west|-aord.south)+(-\teps,-\teps)$)
    -- ($(aplan5reg.south west)+(-\teps,-6*\teps)$)
    -- ($(aplan4reg.south east)+(+\teps,-6*\teps)$)
    -- ($(aplan.north east)+(+\teps,+\teps)$)
    -- ($(ah.east|-aplan.north)+(+\teps,+\teps)$)
    -- ($(ah.north east)+(+\teps,+\teps)$)
    -- cycle;
    \draw[rounded corners,dashed,fill=\colP] ($(a3reg.north west)+(-\teps,+\teps)$) 
    -- ($(aplan3reg.south west)+(-\teps,-6*\teps)$)
    -- ($(aplan3reg.south east)+(+\teps,-6*\teps)$)
    -- ($(a3reg.north east)+(+\teps,\teps)$)
    -- cycle;
    \draw[rounded corners,dashed,fill=gray!15!white] ($(aordham.north west)+(-\teps,+\teps)$) 
    -- ($(aordham.south west)+(-\teps,-6*\teps)$) 
    -- ($(aordham.south east)+(+\teps,-6*\teps)$)
    -- ($(aordham.north east)+(+\teps,+\teps)$) 
    -- cycle;
    
    \theDiagram{}
    \node at (aplan5reg.south west)[anchor=north west,yshift=-0.25*\yr em]
    {\textsl{\NP-hard}};
    \node at (aplan3reg.south west)[anchor=north west,yshift=-0.25*\yr em]{\textsl{Polynomial~time}};
    \node at (aordham.south west)[anchor=north west,yshift=-0.25*\yr em]{\textsl{Open}};

  \end{tikzpicture}
  \caption{Overview of our results.
  An arrow from a box~$A$ to a box~$B$ describes that~$A$'s graph class is included in $B$'s graph class.
  All shown \NP-hardness results hold true
  even if a Hamiltonian cycle is provided as part of the input.
  $^*$\,(holds still true if the graph is additionally regular)
  }
  \label{fig:results}
\end{figure}
\citet{Dailey80} 
proved that \tcTsc{} is \NP-complete 
on 4- and 5-regular planar graphs,
and 
\citet{FleischnerS03}
proved that \tcTsc{} is \NP-complete 
and on 4-regular Hamiltonian graphs.
We strengthen these two results 
by proving that 
\tcTsc{} is \NP-hard 
on 4- and 5-regular planar Hamiltonian graphs
(\cref{sec:planreg}).
Note that \tcTsc{} is polynomial-time solvable on graphs of maximum degree three~\cite{Brooks41}.
Moreover,
we prove that \tcTsc{} 
is \NP-hard
on~$p$-regular Hamiltonian graphs for every~$p\geq 6$
(\cref{sec:reg}).
Finally,
we prove that \tcTsc{} 
is \NP-hard 
on $p$-ordered regular Hamiltonian graphs for every~$p\geq 3$,
which implies \NP-hardness on regular $(p-1)$-connected Hamiltonian graphs
(\cref{sec:ordcon}).
All our \NP-hardness results still hold true
if a Hamiltonian cycle is additionally provided as part of the input
(recall that computing a Hamiltonian cycle is \NP-complete in general~\cite{Karp72}).

  \subparagraph{Related Work.}

In our previous work~\cite{CavallaroF21},
we proved that
\fvsTsc{}
remains \NP-complete on
4- and 5-regular planar Hamiltonian graphs,
$p$-regular Hamiltonian graphs for every~$p\in\N_{\geq 6}$,
and~$p$-Hamiltonian-ordered graphs for every~$p\in\N_{\geq 3}$.
Several classic graph problems are studied on planar regular graphs.
For instance,
\isTsc{} and \dsTsc{} are proven to be \NP-complete on planar 3-regular graphs~\cite{Mohar01,kikuno1980np}.
Moreover,
\isTsc{}
is proven to be \NP-hard 
on 3- and 4-regular Hamiltonian graphs~\cite{FleischnerSS10}.
$p$-Hamiltonian-ordered graphs were introduced by~\citet{NgS97}.

\section{Preliminaries}
\label{sec:prelims}

We denote by~$\N$ and~$\Nzero$ the natural numbers excluding and including zero,
respectively.
We denote by~$\N_{\geq p}$ with~$p\in\N$ the set~$\N_0\setminus\set[0]{p-1}$.
  \subparagraph*{Graph Theory.}

We use basic notations from graph theory~\cite{Diestel10}.
We call graph~$G=(V,E)$ even if~$|V|$ is even,
and odd otherwise.
The neighborhood~$N_G(v)$ of a vertex~$v\in V$ in~$G$
is the vertex set~$\{w\in V\mid \{v,w\}\in E\}$.
The degree of a vertex~$v$ is~$|N_G(v)|$.
A graph is ($k$-)regular if every vertex has the same degree ($k$).
A cycle is a 2-regular connected graph.
We represent a cycle also by a tuple,
that is,
the cycle~$C=(v_0,\dots,v_{n-1},v_0)$ is the cycle with vertex set~$\{v_0,\dots,v_{n-1}\}$
and edge set~$\{\{v_i,v_{i+1\bmod n}\} \mid i\in\set[0]{n-1}\}$.
Let~$v,w$ be two distinct vertices in~$G=(V,E)$.
The graph obtained by \emph{identifying~$v$ with~$w$}
has vertex set~$(V\setminus\{v,w\})\cup\{vw\}$,
where~$vw$ is a new vertex,
and
edge set~$(E\setminus\{e\in E\mid \{v,w\}\cap e\neq\emptyset\})\cup\{\{vw,x\}\mid x\in (N_G(v)\cup N_G(w))\setminus\{v,w\}\}$.
A graph is
planar 
if it can be drawn on the two-dimensional plane 
with no two edges crossing except at their endpoints.
A graph is~$p$-(Hamiltonian-)ordered,
$p\in\N_{\geq 3}$
if for each $p$-tuple~$(v_1,\dots,v_p)$ of vertices
there exists a (Hamiltonian) cycle that visits the vertices~$x_1,\dots,x_p$ in this order.

A $k$-coloring of~$G=(V,E)$ is a function~$f\colon V\to \set{k}$.
A $k$-coloring~$f$ of~$G$ is valid if for every~$\{v,w\}\in E$ it holds that~$f(v)\neq f(w)$.
Graph~$G$ is~$k$-colorable if there is a valid~$k$-coloring of~$G$.
For~$\emptyset\neq V'\subset V$,
a $k$-coloring~$f'\colon V'\to\set{k}$ is called partial.
A partial $k$-coloring~$f'\colon V'\to \set{k}$ is called extendable
if there is a partial~$k$-coloring~$g\colon V\setminus V'\to\set{k}$
such that the~$k$-coloring~$f\colon V\to\set{k}$ with~$f(v)=f'(v)$ if~$v\in V'$,
and~$f(v)=g(v)$ if~$v\in V\setminus V'$,
is valid.

\section{Planar Regular Hamiltonian Graphs}
\label{sec:planreg}

In this section,
we prove that \tcTsc{} 
is \NP-hard on 4- and 5-regular planar Hamiltonian graphs
with a Hamiltonian cycle provided.
Note that no planar $p$-regular graph with~$p\geq 6$ exists.

\subsection{4-regular planar Hamiltonian}
\label{ssec:4regplanarham}

We prove the following.

\begin{theorem}%
 \label{thm:tc:4regplanaHam}
 \tcTsc{} on 4-regular planar Hamiltonian graphs
 is \NP-hard,
 even if a Hamiltonian cycle is provided.
\end{theorem}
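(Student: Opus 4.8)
The plan is to reduce from \tcTsc{} on 4-regular planar graphs, which is \NP-hard by \citet{Dailey80}; since that base class already gives regularity and planarity, the only feature we must engineer is Hamiltonicity, while preserving 4-regularity, planarity, and the answer. Given a connected 4-regular planar graph~$G$ (connectivity is without loss of generality, as disconnected instances can be stitched together by colour-transparent links), I would first fix a planar embedding. Then I replace every vertex~$v$ of~$G$ by a planar \emph{vertex gadget}~$D_v$ drawn inside a small disk, whose boundary carries four \emph{ports} arranged in the cyclic order in which the four edges of~$G$ leave~$v$, each port having internal degree~$3$. The edges of~$G$ become the external edges joining ports of adjacent gadgets, so that each port together with its single external edge has degree~$4$. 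Drawing each~$D_v$ in the disk around~$v$ and each external edge along the corresponding edge of~$G$ keeps the resulting graph~$G'$ planar and $4$-regular.

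The gadget must be \emph{colour-rigid}: in every valid $3$-colouring of~$D_v$ all four ports receive the same colour, and for each of the three colours the interior can be completed to a valid $3$-colouring with the ports set to that colour. Granting this, the common colour of the four ports of~$D_v$ plays the role of a colour of~$v$, and an external edge between a port of~$D_u$ and a port of~$D_v$ enforces exactly that the colours of~$u$ and~$v$ differ; hence~$G'$ is $3$-colourable if and only if~$G$ is. This property is essentially inherited from Dailey's construction: a colour-rigid planar block with ports of internal degree~$3$, built from colour-propagating triangular strips (the ladder primitive of the accompanying figures), already supplies it.

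For the Hamiltonian cycle I would exploit that~$G$ is connected and $4$-regular, hence Eulerian. The key combinatorial fact is that one can choose, at every vertex, a \emph{non-crossing} pairing of its four edge-ends (one of the two pairings that do not cross in the embedding) so that the induced system of closed walks is a \emph{single} circuit: starting from any non-crossing transition system, as long as at least two circuits remain, connectedness of~$G$ forces two distinct circuits to meet at a common vertex, and flipping that vertex to the other (still non-crossing) pairing merges them; iterating reaches one circuit. Inside~$G'$, a non-crossing pairing of the four ports of~$D_v$ is realised by two vertex-disjoint paths that together cover all internal vertices of~$D_v$; this is the gadget's second requirement, a \emph{spanning two-path system for each admissible pairing}, which the paired Dailey gadgets together with their through-path in the figures are meant to certify. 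Lifting the single non-crossing circuit of~$G$ through these spanning two-path systems yields a single closed walk visiting every vertex of~$G'$ exactly once, i.e.\ a Hamiltonian cycle (a connected $2$-factor); the reduction outputs this cycle explicitly, which is what the statement ``even if a Hamiltonian cycle is provided'' demands, and the whole construction is clearly polynomial.

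The main obstacle is the \emph{simultaneous} design of the gadget. It must be planar with four boundary ports of internal degree~$3$, colour-rigid in the sense above, and at the same time admit a spanning pair of vertex-disjoint paths for \emph{each} of the two admissible non-crossing pairings of its ports. Colour-rigidity and the two path systems pull in opposite directions, and planarity forbids realising the third (crossing) pairing inside a disk at all, which is precisely why the circuit of~$G$ has to be made non-crossing before it can be lifted. I expect the construction to couple two copies of Dailey's gadget so that one copy secures colour-rigidity while the coupling retains enough routing freedom to cover both halves by two disjoint paths in either admissible pairing. Verifying these three properties for the concrete gadget, together with checking that the transition-flipping argument never leaves the non-crossing pairings, is the technical heart of the proof.
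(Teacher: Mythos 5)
Your high-level route is genuinely different from the paper's. The paper also reduces from Dailey's \NP-hardness of \tcTsc{} on connected 4-regular planar graphs, but it creates Hamiltonicity by computing a 2-factor in polynomial time and then repeatedly merging two of its cycles, inserting one of two fixed gadgets ($H$ or $S$, \cref{fig:gtb}) into three consecutive edges around an edge joining the two cycles (\cref{constr:2factormerging}); colour-equivalence is verified via \cref{obs:L} and \cref{lem:HSHsSs}, and the gadgets' covering paths splice the two 2-factor cycles into one. You instead replace every \emph{vertex} by a four-port gadget and lift a non-crossing Eulerian circuit of $G$. The Eulerian part of your argument is sound: at a degree-4 vertex the two non-crossing transitions exchange under a flip, flipping at a vertex visited by two distinct closed walks merges them, and connectivity guarantees such a vertex exists while more than one walk remains, so a single non-crossing circuit exists and is computable in polynomial time; lifting it through spanning two-path systems would indeed give a Hamiltonian cycle of the constructed graph.

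The genuine gap is that the vertex gadget --- the entire technical content of your reduction --- is never constructed. You list its three required properties (planarity with four ports of internal degree 3 in the prescribed cyclic order, colour-rigidity forcing one common colour on all four ports with every colour extendable, and a spanning pair of vertex-disjoint paths realising the port pairings), you note yourself that these requirements pull against each other, and you then only ``expect'' that coupling two copies of Dailey's gadget works. Nothing you cite certifies the spanning-two-path property: Dailey's gadgets were designed for colouring only, and the ladder primitive $L$ of \cref{fig:gtb} forces an inequality/equality \emph{pattern} (\cref{obs:L}), not a common colour on four ports, let alone disjoint covering paths. Colour-rigidity alone is plausible (the paper's gadget $W$, \cref{obs:W}, forces exactly such a four-vertex equality), but exhibiting one planar graph that simultaneously satisfies all three properties, with internal vertices of degree 4, is precisely the nontrivial design work, and it is absent; until it is done and verified the way the paper verifies $H$ and $S$, there is no reduction. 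Two smaller remarks: (i) you can weaken your third requirement, since fixing the non-crossing circuit of $G$ \emph{before} building $G'$ determines one pairing per vertex, and a gadget supporting a single adjacent-ports pairing realises the other one when placed rotated by one port; (ii) the paper's 2-factor-merging route avoids this design problem altogether, because an edge-insertion gadget only ever has to support the one connection pattern of the merge being performed, which is why two separately verified gadgets ($H$ for the parallel and $S$ for the crossed connection) suffice there.
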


\noindent
Following the idea of Fleischner and Sabidussi~\cite{FleischnerS03},
we first compute a 2-factor in polynomial time.
A 2-factor of a graph~$G$ is a set~$\{Q_1,\dots,Q_t\}$,
$t\in\N$,
of 
cycles such that~$V(G)=\bigcup_{i\in\set{t}} V(Q_i)$.
Then,
we iteratively ``merge'' two cycles from the 2-factor 
via one of the graphs~$H$ and~$S$
(see~\cref{fig:gtb}).
\begin{figure}[t]
 \centering
  \begin{tikzpicture}

    \def\xr{1.25}
    \def\yr{0.75}
    \tikzpramble{};
    
    \begin{scope}[xshift=0.5cm,yshift=-2.5*\yr cm]
     \node at (-0.5*\xr,0.5*\yr)[]{(a)};
     \node at (1.0*\xr,0.45*\yr)[anchor=west]{Graph~$L$};
     \theLad{a}{0}{0}
     \foreach\x\y in{xa/xnodeA,ax1/xnodeB,ax2/xnodeC,ax3/xnodeA,ax4/xnodeB,ax5/xnodeC,ax6/xnodeA}
     {\node at (\x)[\y]{};}
     \node at (xa)[label=90:{$x$}]{};
     \node at (ya)[label=90:{$y$}]{};
     \node at (ax1)[label=180:{$x_1$}]{};
     \node at (ax6)[label=90:{$y_1$}]{};
     \node at (ax5)[label=0:{$y_2$}]{};
     \node at (ax2)[label=90:{$x_2$}]{};
     \node at (ax4)[label=90:{$y_3$}]{};
     \node at (ax3)[label=-90:{$z$}]{};
    \end{scope}
    
    \begin{scope}[xshift=-2.5cm,yshift=-5*\yr cm]
     \node at (-0.5*\xr,0.5*\yr)[]{(b)};
     \node at (1.0*\xr,0.25*\yr)[anchor=west]{Graph~$H$};
     \theH{0}{0}
     \theHcolor{}
     \theHpath{}
     \node at (xa)[label=90:{$x$}]{};
     \node at (ya)[label=90:{$y$}]{};
     \node at (ax6)[label=90:{$y_1$}]{};
     \node at (xb)[label=-90:{$x'$}]{};
     \node at (yb)[label=-90:{$y'$}]{};
     \node at (bx6)[label=-90:{$y_1'$}]{};
    \end{scope}
    \begin{scope}[xshift=4cm,yshift=-5*\yr cm]
     \node at (-0.5*\xr,0.5*\yr)[]{(c)};
     \node at (1.0*\xr,0.25*\yr)[anchor=west]{Graph~$S$};
     \theS{0}{0}
     \theScolor{}
     \theSpath{}
     \node at (xa)[label=90:{$x$}]{};
     \node at (ya)[label=90:{$y$}]{};
     \node at (ax6)[label=90:{$y_1$}]{};
     \node at (xb)[label=-90:{$x'$}]{};
     \node at (yb)[label=-90:{$y'$}]{};
     \node at (bx1)[label=0:{$x_1'$}]{};
     \node at (bx5)[label=180:{$y_2'$}]{};
     \node at (ab2)[label=0:{$z$}]{};
    \end{scope}

    \end{tikzpicture}
    \caption{The graphs 
    (a)~$L$,
    (b)~$H$,
    and (c)~$S$,
    For each graph, 
    an extendable partial (without~$y,y'$) 3-coloring is given.
    Moreover,
    for graph~$H$,
    an~$x$-$x'$ path and a~$y$-$y'$ path (blue) is depicted
    covering all vertices from the graph.
    For graph~$S$, 
    an~$x$-$y'$ path and an~$x'$-$y$ path (blue) is depicted
    covering all vertices from the graph.}
    \label{fig:gtb}
\end{figure}
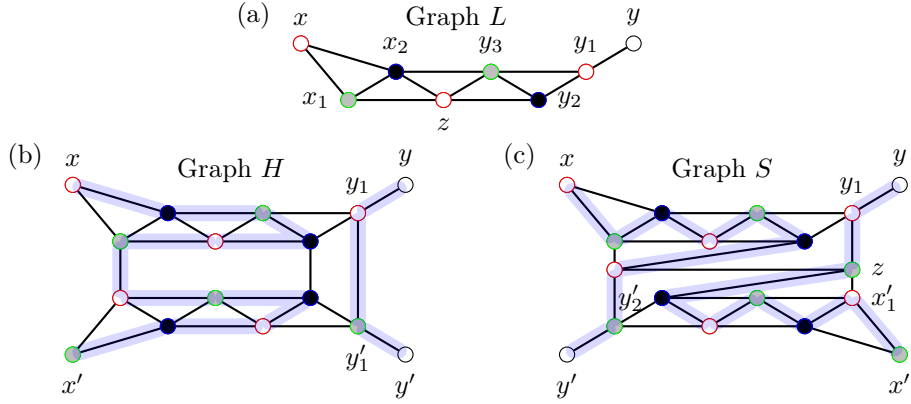
By this,
in each repetition,
we reduce the number of cycles in the 2-factor while preserving a 2-factor.
Finally,
we obtain a Hamiltonian cycle.

Each of the graphs~$H$ and~$S$,
contains two disjoint subgraphs isomorph to the graph~$L$
(see~\cref{fig:gtb}(a)).
We have the following important property of~$L$.

\begin{observation}
 \label{obs:L}
 Graph~$L$ is 3-colorable and for every valid 3-coloring~$f\colon V(L)\to\CS$,
 it holds that~$f(x)\neq f(y)$, 
 $f(x)=f(y_1)$,
 and~$f(x)\cup f(x_1)\cup f(y_2)=\CS$.
\end{observation}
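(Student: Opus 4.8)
The plan is to treat $L$ as a strip of five triangles glued edge-to-edge and to track how a valid $3$-coloring is forced to propagate along this strip. Reading off the edges of $L$ from \cref{fig:gtb}(a), the vertex sets $\{x,x_1,x_2\}$, $\{x_1,x_2,z\}$, $\{x_2,z,y_3\}$, $\{z,y_3,y_2\}$, and $\{y_3,y_2,y_1\}$ each induce a triangle, consecutive triangles share exactly one edge, and $y$ is attached only to $y_1$. To establish $3$-colorability it suffices to exhibit one valid coloring, so I would simply verify that the (partial) assignment depicted in the figure, extended by any $f(y)\in\CS\setminus\{f(y_1)\}$, respects every edge of $L$; this is a routine check against the edge list.

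For the forced-equality claims I would argue by propagation. In any valid $3$-coloring each triangle receives all three colors of $\CS$. Starting from $\{x,x_1,x_2\}$, these three vertices get pairwise distinct colors, which already yields $f(x)\cup f(x_1)\cup f(x_2)=\CS$. Now I would walk along the strip: since $z$ together with the shared edge $\{x_1,x_2\}$ forms a triangle, $z$ must take the unique remaining color, namely $f(z)=f(x)$. The same reasoning applied to $\{x_2,z,y_3\}$ forces $f(y_3)$ to be the color missing from $\{f(x_2),f(z)\}=\{f(x_2),f(x)\}$, i.e.\ $f(y_3)=f(x_1)$; the triangle $\{z,y_3,y_2\}$ then forces $f(y_2)=f(x_2)$; and the triangle $\{y_3,y_2,y_1\}$ forces $f(y_1)$ to avoid $\{f(y_3),f(y_2)\}=\{f(x_1),f(x_2)\}$, hence $f(y_1)=f(x)$.

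These three forced equalities give exactly what the statement requires. From $f(y_1)=f(x)$ and the edge $\{y_1,y\}$ I obtain $f(y)\neq f(y_1)=f(x)$, establishing $f(x)\neq f(y)$; the equality $f(x)=f(y_1)$ is immediate from the propagation; and substituting $f(y_2)=f(x_2)$ turns $\{f(x),f(x_1),f(y_2)\}$ into $\{f(x),f(x_1),f(x_2)\}=\CS$, since $\{x,x_1,x_2\}$ is a triangle. I do not expect a genuine obstacle here: the whole argument rests on each ``new'' vertex of a triangle being forced to the unique leftover color, so the only point requiring care is reading the adjacencies of $L$ correctly—in particular, confirming that consecutive triangles overlap in exactly one edge and that the new vertex is adjacent to both endpoints of that shared edge.
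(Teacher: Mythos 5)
Your proposal is correct and takes essentially the same approach as the paper: both proofs read $L$ as a strip of triangles sharing edges, cite the figure's coloring for 3-colorability, and propagate forced colors along the strip ($f(z)=f(x)$, then the colors of $y_3,y_2,y_1$) to conclude $f(y_1)=f(x)$, $f(x)\neq f(y)$, and $\{f(x),f(x_1),f(y_2)\}=\CS$. The paper's version is merely terser, skipping the explicit intermediate equalities $f(y_3)=f(x_1)$ and $f(y_2)=f(x_2)$ that you spell out.
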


\begin{proof}
 An extendable partial 3-coloring of~$L$ is depicted in~\cref{fig:gtb}(a).
 Let~$f\colon V(L)\to \CS$ be an arbitrary valid 3-coloring of~$L$.
 We have that~$f(x)\notin\{f(x_1),f(x_2)\}$,
 and hence~$f(z)=f(x)$.
 Since~$f(z)\notin \{f(y_2),f(y_3)\}$,
 we have that~$f(y_1)=f(x)$ and hence~$f(y)\neq f(x)$.
 Since~$f(x_2)\neq f(y_3)$,
 we have~$f(x_1)\neq f(y_2)$
 and thus~$f(x)\cup f(x_1)\cup f(y_2)=\CS$.
 \lqed
\end{proof}

We have the following important properties on~$H$ and~$S$.

\begin{lemma}
 \label{lem:HSHsSs}
 Graph~$X\in\{H,S\}$ is 3-colorable and for every valid 3-coloring~$f\colon V(X)\to\CS$,
 it holds that~$f(x)\neq f(y)$, $f(x')\neq f(y')$, and~$f(x)\neq f(x')$.
\end{lemma}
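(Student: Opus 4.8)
The plan is to prove the two assertions separately and, for each, to reduce everything to \cref{obs:L} applied to the two vertex-disjoint copies of $L$ contained in $X$. Write $x,y,x_1,y_1,y_2$ for the distinguished vertices of the first copy and $x',y',x_1',y_1',y_2'$ for those of the second, as labelled in \cref{fig:gtb}. The starting observation is that restricting any valid $3$-coloring $f$ of $X$ to either copy yields a valid $3$-coloring of $L$, so \cref{obs:L} applies verbatim to each copy; in particular $f(x)=f(y_1)$, $f(x')=f(y_1')$, $\{f(x),f(x_1),f(y_2)\}=\CS$, and $\{f(x'),f(x_1'),f(y_2')\}=\CS$.

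For $3$-colorability I would exhibit an explicit valid coloring obtained by colouring the two copies of $L$ by \cref{obs:L} so that $f(x)\neq f(x')$, which is possible because composing a valid coloring of $L$ with a permutation of $\CS$ is again valid, letting me fix $f(x)$ and $f(x')$ to two different colours independently. For $X=H$ one only has to choose the two port patterns on $\{x_1,y_2,y_1\}$ and on $\{x_1',y_2',y_1'\}$ discordantly (differing in all three coordinates), which is possible since \cref{obs:L} forces each such triple to receive all three colours; this makes the connecting edges $\{x_1,x_1'\},\{y_2,y_2'\},\{y_1,y_1'\}$ proper. For $X=S$ the two internal vertices get their colours forced by the adjacent copies but remain colourable exactly because $f(x)\neq f(x')$ (see below). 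A coloring of this kind is drawn in \cref{fig:gtb}(b),(c), and verifying it is a finite check of the few edges joining the two copies.

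The inequalities $f(x)\neq f(y)$ and $f(x')\neq f(y')$ are immediate, being the statement $f(x)\neq f(y)$ of \cref{obs:L} applied to each copy. The remaining inequality $f(x)\neq f(x')$ is the crux, and is where $H$ and $S$ diverge. For $X=H$ it is nearly free: the two copies are joined by the edge $\{y_1,y_1'\}$, so $f(y_1)\neq f(y_1')$, and combined with $f(x)=f(y_1)$ and $f(x')=f(y_1')$ this gives $f(x)\neq f(x')$ at once. The hard part is $X=S$, where the copies are not adjacent but communicate through the central vertex $z$ of \cref{fig:gtb}(c), which is adjacent to $x_1'$, $y_2'$, and $y_1$. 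Here I would run a colour-forcing argument: since $\{f(x'),f(x_1'),f(y_2')\}=\CS$, the colours $f(x_1')$ and $f(y_2')$ are distinct and the only colour avoiding both is $f(x')$, so the adjacency of $z$ to $x_1'$ and $y_2'$ forces $f(z)=f(x')$; but $z$ is also adjacent to $y_1$ with $f(y_1)=f(x)$, whence $f(z)\neq f(x)$, and therefore $f(x)\neq f(z)=f(x')$. I expect the only real friction to be keeping the incidences of the two central vertices of $S$ straight; once $z\sim x_1',y_2',y_1$ is pinned down the forcing step is a one-liner, and the symmetric central vertex adjacent to $x_1,y_2,y_1'$ yields the same conclusion, confirming consistency.
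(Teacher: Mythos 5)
Your proposal is correct and follows essentially the same route as the paper's proof: both establish 3-colorability via an explicit coloring of the two $L$-copies (the paper points to \cref{fig:gtb}), derive $f(x)\neq f(y)$ and $f(x')\neq f(y')$ from \cref{obs:L}, handle $X=H$ via the edge $\{y_1,y_1'\}$ together with $f(x)=f(y_1)$, $f(x')=f(y_1')$, and handle $X=S$ by the color-forcing argument at $z$ using $\{y_1,x_1',y_2'\}\subseteq N_X(z)$ and $\{f(x'),f(x_1'),f(y_2')\}=\CS$. Your write-up even makes explicit the intermediate step $f(z)=f(x')$ that the paper leaves implicit.
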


\begin{proof}
 We distinguish whether~$X=H$ or~$X=S$.
 
 \emph{Case 1: $X=H$}.
 An extendable partial 3-coloring of~$X$ is depicted \cref{fig:gtb}(b).
 Let~$f\colon V(X)\to\CS$ be a valid 3-coloring of~$X$.
 Due to~\cref{obs:L},
 we know that~$f(x)\neq f(y)$ and $f(x')\neq f(y')$.
 Moreover,
 we know that~$f(x)=f(y_1)$ and~$f(x')=f(y_1')$,
 and since~$\{y_1,y_1'\}\in E(X)$,
 it follows that~$f(x)\neq f(x')$.
 
 \emph{Case 2: $X=S$}.
 An extendable partial 3-coloring of~$X$ is depicted \cref{fig:gtb}(c).
 Let~$f\colon V(X)\to\CS$ be a valid 3-coloring of~$X$.
 Due to~\cref{obs:L},
 we know that~$f(x)\neq f(y)$ and $f(x')\neq f(y')$.
 Moreover,
 we know that~$f(x') \cup f(x_1')\cup f(y_2')=\CS$.
 Since~$\{y_1,x_1',y_2'\}\subseteq N_X(z)$,
 it follows that~$f(y_1)\neq f(x')$.
 Since~$f(x)=f(y_1)$,
 we thus have that~$f(x)\neq f(x')$.
 \lqed
\end{proof}

\begin{definition}[Insertion]
 Let~$G$ be a graph and
 $(v',v,u,u')\in V(G)^4$ be a quadruple with~$\{v',v\},\allowbreak\{v,u\},\{u,u'\}\in E(G)$.
 An~$X$-insertion 
 at~$(v',v,u,u')$ with~$X\in\{H,S\}$
 results in the graph obtained from~$G$
 by 
 deleting the edges~$\{v',v\}$, 
 $\{v,u\}$,
 and $\{u,u'\}$,
 adding a copy of~$X$ to~$G$, 
 and identifying~$y'$ with~$v'$,
 $x'$ with~$v$,
 $x$ with~$u$,
 and~$y$ with~$u'$.
\end{definition}

{
\begin{construction}
 \label{constr:2factormerging}
 Let~$G$ be a connected planar 4-regular graph, 
 and let~$\calQ$ be a 2-factor of~$G$ with~$|\calQ|>1$.
 Then there are two distinct cycles~$Q_i,Q_j\in \calQ$
 with two adjacent vertices~$u$ and~$v$ with~$u\in V(Q_i)$ and~$v\in V(Q_j)$.
 We distinguish two cases
 (see~\cref{fig:2factor} for an illustration).
 \begin{figure}[t]
  \centering
  \begin{tikzpicture}
   \def\xr{0.93}
   \def\yr{0.75}
   \def\xsh{1.66}
   \tikzpramble{}
   
   \newcommand{\exmex}{%
    \node (u) at (0,0)[xnode,label=90:{$u$}]{};
    \node (v) at (0*\xr,-3*\yr)[xnode,label=-90:{$v$}]{};
    \node (u1) at (-1*\xr,0*\yr)[inner sep=0pt]{};
    \node (u2) at (-1*\xr,-0.5*\yr)[inner sep=0pt]{};
    \node (u3) at (3.5*\xr,0*\yr)[xnode,label=90:{$u'$}]{};
    \node (u31) at (4.5*\xr,0*\yr)[inner sep=0pt]{};
    \node (u32) at (4.5*\xr,-0.5*\yr)[inner sep=0pt]{};
    \node (u33) at (4.5*\xr,-1*\yr)[inner sep=0pt]{};
    \node (v1) at (-1*\xr,-2.5*\yr)[inner sep=0pt]{};
    \node (v2) at (-1*\xr,-2*\yr)[inner sep=0pt]{};
    \node (v3) at (3.5*\xr,-3*\yr)[xnode,label=-90:{$v'$}]{};
    \node (v31) at (4.5*\xr,-3*\yr)[inner sep=0pt]{};
    \node (v32) at (4.5*\xr,-2.5*\yr)[inner sep=0pt]{};
    \node (v33) at (4.5*\xr,-2*\yr)[inner sep=0pt]{};
   }
   \newcommand{\exmaxA}{%
    \draw[xedge] (u) to (v);
    \foreach\x in {1,...,3}{\draw[xedge] (u) to (u\x);\draw[xedge] (v) to (v\x);}
    \foreach\x in {1,...,3}{\draw[xedge] (u3) to (u3\x);\draw[xedge] (v3) to (v3\x);}
   }
   \newcommand{\exmaxB}{%
    \foreach\x in {1,...,2}{\draw[xedge] (u) to (u\x);\draw[xedge] (v) to (v\x);}
    \foreach\x in {1,...,3}{\draw[xedge] (u3) to (u3\x);\draw[xedge] (v3) to (v3\x);}
   }
   \newcommand{\exmexx}{%
    \node (u) at (0,0)[xnode,label=90:{$u$}]{};
    \node (v) at (3.5*\xr,-3*\yr)[xnode,label=-90:{$v$}]{};
    \node (u1) at (-1*\xr,-0.5*\yr)[inner sep=0pt]{};
    \node (u2) at (-1*\xr,0*\yr)[inner sep=0pt]{};
    \node (u3) at (3.5*\xr,0*\yr)[xnode,label=90:{$u'$}]{};
    \node (u31) at (4.5*\xr,0*\yr)[inner sep=0pt]{};
    \node (u32) at (4.5*\xr,-0.5*\yr)[inner sep=0pt]{};
    \node (u33) at (4.5*\xr,-1*\yr)[inner sep=0pt]{};
    \node (v1) at (4.5*\xr,-2.5*\yr)[inner sep=0pt]{};
    \node (v2) at (4.5*\xr,-3*\yr)[inner sep=0pt]{};
    \node (v3) at (0*\xr,-3*\yr)[xnode,label=-90:{$v'$}]{};
    \node (v31) at (-1*\xr,-3*\yr)[inner sep=0pt]{};
    \node (v32) at (-1*\xr,-2.5*\yr)[inner sep=0pt]{};
    \node (v33) at (-1*\xr,-2*\yr)[inner sep=0pt]{};
   }
   \newcommand{\exmaxxA}{%
    \draw[xedge] (u) to (v);
    \foreach\x in {1,...,3}{\draw[xedge] (u) to (u\x);\draw[xedge] (v) to (v\x);}
    \foreach\x in {1,...,3}{\draw[xedge] (u3) to (u3\x);\draw[xedge] (v3) to (v3\x);}
   }
   \newcommand{\exmaxxB}{%
    \foreach\x in {1,...,2}{\draw[xedge] (u) to (u\x);\draw[xedge] (v) to (v\x);}
    \foreach\x in {1,...,3}{\draw[xedge] (u3) to (u3\x);\draw[xedge] (v3) to (v3\x);}
   }

   \begin{scope}[]
    \node at (-1*\xr,0.75*\yr)[]{(a)};
    \exmex{}
    \exmaxA{};
    \draw[xpathx] (u1) to (u) to (u3) to (u32);
    \draw[xpathy] (v2) to (v) to (v3) to (v32);
   \end{scope}
   
   \begin{scope}[yshift=-4.5*\yr cm]
    
    \node at (\xsh*\xr,0.75*\yr)[rotate=-90]{$\leadsto$};
    \exmex{}
    \exmaxB{};
    \begin{scope}%
    \theH{0}{0}
    \theHpath{}
    \end{scope}
    \draw[xpath] (u) to (u1);
    \draw[xpath] (v) to (v2);
    \draw[xpath] (u3) to (u32);
    \draw[xpath] (v3) to (v32);
    \draw[xpathx] (u) to (u1);
    \draw[xpathy] (v) to (v2);
    \draw[xpathx] (u3) to (u32);
    \draw[xpathy] (v3) to (v32);
   \end{scope}
   
   \begin{scope}[xshift=7.0*\xr cm]
    \node at (-1*\xr,0.75*\yr)[]{(b)};
    \exmexx{}
    \exmaxxA{}
    \draw[xpathx] (u2) to (u) to (u3) to (u32);
    \draw[xpathy] (v2) to (v) to (v3) to (v32);
   \end{scope}
   
   \begin{scope}[xshift=7.0*\xr cm,yshift=-4.5*\yr cm]
    \node at (\xsh*\xr,0.75*\yr)[rotate=-90]{$\leadsto$};
    \exmexx{}
    \exmaxxB{}
    \begin{scope}%
    \theS{0}{0}
    \theSpath{}
    \end{scope}
    
    \draw[xpath] (u2) to (u);
    \draw[xpath] (v) to (v2);
    \draw[xpath] (u3) to (u32);
    \draw[xpath] (v3) to (v32);
    \draw[xpathx] (u2) to (u);
    \draw[xpathy] (v) to (v2);
    \draw[xpathx] (u3) to (u32);
    \draw[xpathy] (v3) to (v32);
   \end{scope}
  \end{tikzpicture}
  \caption{Illustration to~\cref{constr:2factormerging} of (a) Case 1 and (b) Case 2.
  Indicated are
  two different 2-factor components in each upper part (magenta and green)
  and the ``merged'' cycle,
  being part of the new 2-factor with one less component,
  in the lower part (blue).}
  \label{fig:2factor}
 \end{figure}
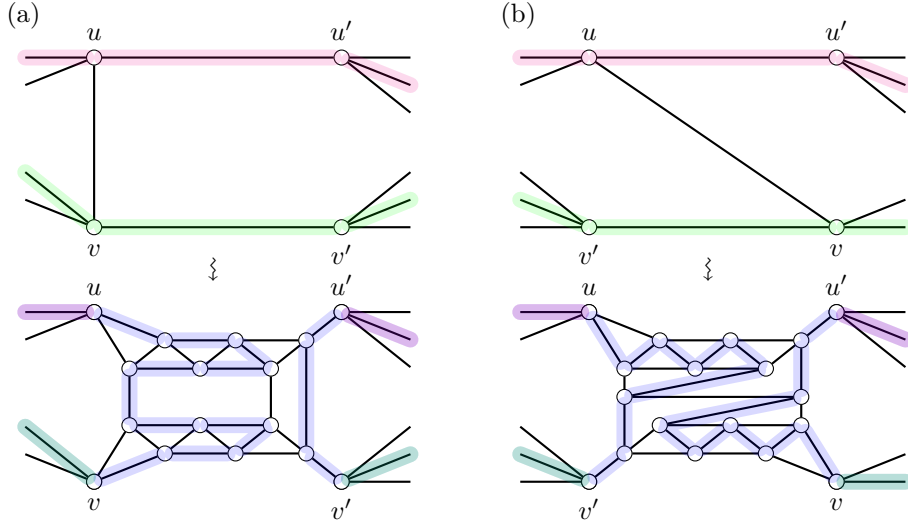

 \begin{description}
  \item[Case 1:] There are two edges~$\{u',u\}\in E(Q_i)$ and~$\{v,v'\}\in E(Q_j)$
 sharing the same face 
 (see~\cref{fig:2factor}(a)).
 Make an~$H$-insertion at~$(v',v,u,u')$.
  \item[Case 2:] 
 There are no two edges~$e\in E(Q_i)$ and~$e'\in E(Q_j)$
 with~$u\in e$ and~$v\in e'$
 sharing the same face
 (see~\cref{fig:2factor}(b)).
 Since~$u$ and~$v$ are of degree four,
 there are two edges~$\{u,u'\}\in E(Q_i)$
 and
 $\{v,v'\}\in E(Q_j)$
 such that each share a face with~$e$.
 Make an~$S$-insertion at~$(v',v,u,u')$.
 \cqed
 \end{description}
\end{construction}
}

\cref{lem:HSHsSs} implies the following.

\begin{observation}
 \label{obs:corr}
  Let~$G'$ be the graph obtained from a graph~$G$ using~\cref{constr:2factormerging}.
  Then,
  $G$ is 3-colorable if and only if~$G'$ is 3-colorable.
\end{observation}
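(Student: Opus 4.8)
The statement is an equivalence, and in both directions the copy of $X$ inserted by \cref{constr:2factormerging} is what mediates between $G$ and $G'$. Recall that the insertion identifies $y'$ with $v'$, $x'$ with $v$, $x$ with $u$, and $y$ with $u'$, and deletes exactly the edges $\{v',v\}$, $\{v,u\}$, $\{u,u'\}$. Under the identification these three deleted edges become the pairs $\{y',x'\}$, $\{x',x\}$, and $\{x,y\}$ -- precisely the pairs whose endpoints \cref{lem:HSHsSs} forces to receive distinct colors in every valid $3$-coloring of $X$. This exact match between the ``deleted constraints'' of $G$ and the ``forced inequalities'' of $X$ is what makes the reduction correct, and it drives both directions.

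For the backward direction, let $f'$ be a valid $3$-coloring of $G'$. Every edge of the inserted copy of $X$ is an edge of $G'$, so the restriction of $f'$ to $V(X)$ is a valid $3$-coloring of $X$; by \cref{lem:HSHsSs} it satisfies $f'(x)\neq f'(y)$, $f'(x')\neq f'(y')$, and $f'(x)\neq f'(x')$. I would then read off a coloring $c$ of $G$ by copying $f'$ on the shared vertices (in particular $c(u)=f'(x)$, $c(u')=f'(y)$, $c(v)=f'(x')$, $c(v')=f'(y')$). Every edge of $G$ that survives in $G'$ is properly colored because $f'$ is, and the only edges of $G$ missing from $G'$ are the three deleted ones; the three inequalities above say exactly that $c(v')\neq c(v)$, $c(v)\neq c(u)$, and $c(u)\neq c(u')$. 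Hence $c$ is valid.

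For the forward direction, let $c$ be a valid $3$-coloring of $G$. Validity of the three soon-to-be-deleted edges means the boundary coloring induced on $\{x,y,x',y'\}$ already satisfies $f(x)\neq f(y)$, $f(x')\neq f(y')$, and $f(x)\neq f(x')$, and it remains to extend it over the interior of $X$. It is convenient to first record a mild strengthening of \cref{obs:L}, obtained by directly inspecting $L$: for every assignment with $f(x)\neq f(y)$ there is a valid $3$-coloring of $L$ extending it in which $f(y_1)=f(x)$ is forced, while $(f(x_1),f(y_2))$ may be taken to be \emph{either} ordering of $\CS\setminus\{f(x)\}$. Using this on each of the two copies of $L$ inside $X$, the case $X=S$ is immediate: color the copies extending $(f(x),f(y))$ and $(f(x'),f(y'))$, and color the hub vertex $z$ of \cref{fig:gtb}(c) with $f(x')$ and its counterpart with $f(x)$ (the only colors consistent with their neighbours in the respective copies); since the two copies share no edge directly, every remaining edge is incident to a hub and its properness reduces to $f(x)\neq f(x')$ together with the forced avoidances of \cref{obs:L}. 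For $X=H$ the two copies are instead joined by the three rung edges $\{y_1,y_1'\}$, $\{x_1,x_1'\}$, $\{y_2,y_2'\}$; the first is automatically proper because $f(y_1)=f(x)\neq f(x')=f(y_1')$, and I would use the ordering freedom above to make the other two proper.

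The main obstacle is exactly this last step for $H$: I must verify that whenever $f(x)\neq f(x')$ at least one of the $2\times2$ combinations of orderings for the two copies gives $f(x_1)\neq f(x_1')$ and $f(y_2)\neq f(y_2')$ simultaneously. This is a short finite check -- by permuting colors it suffices to treat the representative $f(x)=1$, $f(x')=2$ -- and it is the point at which the single coloring displayed in \cref{fig:gtb}(b) should be understood as one illustrative instance, the correct orderings being dictated by the boundary colors inherited from $c$. Once this is done, $c$ on $V(G)$ together with the chosen extension on $V(X)$ is a valid $3$-coloring of $G'$, which completes the equivalence.
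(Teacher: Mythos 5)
Your argument is correct, but it takes a genuinely different---and substantially more laborious---route than the paper, which in fact offers no proof at all: \cref{obs:corr} is introduced with ``\cref{lem:HSHsSs} implies the following''. Your backward direction is exactly what that lemma gives, and is surely what the paper intends. The divergence is in the forward direction. The short argument the paper implicitly relies on is: take \emph{any} valid $3$-coloring $g$ of $X$ (it exists by \cref{lem:HSHsSs}), permute the three colors so that $g(x)=c(u)$ and $g(x')=c(v)$ (possible since $g(x)\neq g(x')$ and $c(u)\neq c(v)$), and paste $g$ onto the interior of $X$; the only interface edges not already handled by $g$ or $c$ are $\{y,y_1\}$ and $\{y',y_1'\}$, and these are proper because \cref{obs:L} forces $g(y_1)=g(x)=c(u)\neq c(u')$ and likewise $g(y_1')=c(v)\neq c(v')$. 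This permutation trick works precisely because $y$ and $y'$ each have a single neighbour inside $X$, so the boundary has only two effective degrees of freedom, not four. You instead prove the stronger claim that \emph{every} assignment to $(x,y,x',y')$ satisfying the three inequalities extends into $X$, via the ordering freedom in $L$; this is sound, and the $2\times 2$ check you defer for $H$ does succeed: for $f(x)=1$, $f(x')=2$, take $(f(x_1),f(y_2))=(2,3)$ and $(f(x_1'),f(y_2'))=(3,1)$ (or the mirrored pair $(3,2)$ and $(1,3)$); you should write out these two lines rather than leave them as a claim, since they are the crux of your version. Incidentally, your caveat that \cref{fig:gtb}(b) shows only ``one illustrative instance'' is sharper than you may have intended: as drawn, that coloring assigns $y_2$ and $y_2'$ the same color even though they are adjacent in $H$---it realizes exactly one of the two bad combinations of orderings---so the coordination of the two copies that your case analysis makes explicit is genuinely needed and cannot be read off the figure. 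What your route buys is a self-contained extendability statement, and it exposes that ``implies'' is a slight overstatement of what \cref{lem:HSHsSs} alone gives; what the paper's (implicit) route buys is brevity.
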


\citet[Theorems~2.1--2.3]{GareyJS76}
proved that \tcTsc{} is \NP-hard on \emph{connected} planar graphs of maximum vertex degree four.
\citet{Dailey80}
proved that
\tcTsc{} remains \NP-hard on connected 4-regular planar graphs.
We are set to prove the main result of this section.

\begin{proof}[Proof of~\cref{thm:tc:4regplanaHam}]
 Let~$(G)$ with~$G=(V,E)$ be an instance of \tcAcr{} 
 on connected 4-regular planar graphs.
 Compute a 2-factor~$\calQ$ of~$G$ in polynomial time~\cite{Mulder92,FleischnerS03}.
 If~$\calQ=\{Q_1\}$,
 then return~$(G,Q_1)$.
 Otherwise,
 apply~\cref{constr:2factormerging} iteratively.
 In each iteration~$i\geq 1$,
 we obtain an equivalent
 instance (\cref{obs:corr})
 and a 2-factor,
 say~$\calQ'$, with~$|\calQ'|=|\calQ|-i$.
 Thus,
 after at most~$n/3$ steps~\cite{FleischnerS03},
 we obtain a graph~$G'$ and a Hamiltonian cycle~$\HC$ of~$G'$.
 Finally,
 return the instance~$(G',\HC)$.
 \lqed
\end{proof}

In the next section,
we prove that \tcTsc{} is \NP-hard on 5-regular planar Hamiltonian graphs.
For this,
we need an even input graph as we will connect disjoint pairs of vertices with some gadget.
Building on \cref{thm:tc:4regplanaHam},
we prove that we can further turn a 4-regular planar Hamiltonian graph into
an even 4-regular planar Hamiltonian graph.

\begin{proposition}
 \label{prop:4even}
 \tcTsc{} is \NP-hard on even 4-regular planar Hamiltonian graphs,
 even if a Hamiltonian cycle is given.
\end{proposition}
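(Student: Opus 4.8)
The plan is to reduce from \tcTsc{} on 4-regular planar Hamiltonian graphs with a Hamiltonian cycle provided, which is \NP-hard by \cref{thm:tc:4regplanaHam}. Given such an instance $(G,\HC)$, if $|V(G)|$ is even I would simply output $(G,\HC)$ unchanged. All the work lies in the odd case, where I must add an odd number of vertices while preserving 4-regularity, planarity, 3-colorability, and a single Hamiltonian cycle (which I also output).

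For the odd case I would design one more \emph{parity gadget} $W$, playing the same role for a single cycle that $H$ and $S$ play in \cref{constr:2factormerging} for two cycles, but of odd order. Concretely, I would take four consecutive vertices $v',v,u,u'$ along $\HC$, delete the three cycle edges among them, and splice in $W$ exactly as in the definition of an insertion, identifying four ports of $W$ with $v',v,u,u'$. As with $H$ and $S$, I would build $W$ out of copies of the block $L$, so that the argument of \cref{obs:L,lem:HSHsSs} applies: every valid 3-coloring of $W$ forces \emph{precisely} the inequalities carried by the three deleted edges and imposes no further constraint, and conversely every boundary assignment satisfying those inequalities extends to all of $W$. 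This makes the insertion preserve 3-colorability by the same reasoning as \cref{obs:corr}. I would give the ports internal degrees $2,1,2,1$ so that after deleting the three edges and identifying, the four affected vertices of $G$ return to degree~$4$; since every interior vertex of $W$ has degree~$4$, the result is 4-regular, and drawing $W$ inside the face freed by the deleted edges keeps it planar.

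Two issues go beyond the even constructions and deserve care. First, Hamiltonicity: deleting the three consecutive cycle edges leaves a single residual arc $P$ of $\HC$ running from $v'$ to $u'$ through all vertices except $u$ and $v$, so I cannot reuse the two-path cover that $H$ and $S$ use for \emph{merging} two cycles. Instead $W$ must carry a single Hamiltonian path between the two ports identified with $u'$ and $v'$; concatenating $P$ with this path yields one Hamiltonian cycle of the new graph, which I output. Second, parity: two copies of $L$ contribute an even number of vertices, and the extra wiring used by $H$ adds none, so I would flip the parity by completing the six degree-$3$ vertices of the two $L$-blocks using an \emph{odd} number of degree-$4$ connector vertices (a single connector cannot absorb all six, and the two connectors of $S$ give even order, so I would use three), wired so that they respect the forced colors of the $L$-blocks. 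The net number of added vertices is then odd, turning an odd instance into an even one.

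The main obstacle is realizing all of these requirements in one explicit gadget simultaneously: odd order, planarity, a 4-regular interior, the exact edge-simulating coloring behaviour, and a single Hamiltonian path between the two correct ports. The coloring behaviour should follow the template of \cref{obs:L,lem:HSHsSs} once the connectors are attached consistently with the $L$-blocks' forced colors; planarity and the required Hamiltonian path would then be read off directly from the explicit drawing of $W$, exactly as they were verified for $H$ and $S$.
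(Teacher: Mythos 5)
There is a genuine gap. Your proof correctly reduces from \cref{thm:tc:4regplanaHam} and handles the even case, but in the odd case everything hinges on an odd-order gadget $W$ that you never construct: you list the properties it must have (odd number of internal vertices, 4-regular interior, planarity, ports of the right internal degrees, coloring behaviour simulating exactly the three deleted edges, and a single Hamiltonian path between the two ports identified with $u'$ and $v'$), and then explicitly defer ``the main obstacle'' of realizing them simultaneously. That realization \emph{is} the proof; none of \cref{obs:L}, \cref{lem:HSHsSs}, or the drawings of $H$ and $S$ certify that such an odd gadget exists, and your parity fix (three mutually adjacent degree-4 connectors attached to the six degree-3 vertices of two $L$-blocks) is only checked at the level of degree counts --- whether it admits the required forced-coloring behaviour, a planar embedding, and the required Hamiltonian path is exactly the kind of case analysis that was needed for $H$ and $S$ and is absent here. (There is also a small slip: deleting $\{v',v\},\{v,u\},\{u,u'\}$ removes degrees $1,2,2,1$ from $v',v,u,u'$, so the ports must have internal degrees $1,2,2,1$ in that pairing, not $2,1,2,1$.)

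The paper avoids the need for any odd gadget by a doubling trick: it takes $G$ together with a disjoint mirrored copy $G'$, so the vertex count $2|V(G)|$ is even regardless of the parity of $G$, and joins the two copies through two copies of an \emph{equality} gadget $W$ (\cref{constr:geteven}), each of which forces its two identified endpoints $u,u'$ (resp.\ $v,v'$) to receive the same color (\cref{obs:W}). Since $G'$ is a copy of $G$, copying any valid 3-coloring of $G$ onto $G'$ satisfies these equality constraints, so 3-colorability is preserved; the gadgets are wired so that 4-regularity and planarity survive and so that $\HC$ and its copy $\HC'$ merge into a single Hamiltonian cycle, and the $26$ added vertices keep the total even. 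If you want to salvage your single-insertion route, you must exhibit the odd gadget explicitly and verify all five of its properties; otherwise the doubling construction is the cleaner path, since it replaces the hard existence question by an equality gadget whose correctness is a local check.
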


To prove~\cref{prop:4even},
we will add a disjoint copy of the input graph to itself and
connect the two graphs using two copies of the graph~$W$
(see~\cref{fig:geteven}(a)).
The graph~$W$ is inspired by a graph of~\citet[Fig.~3]{Dailey80}.
It is not difficult to see that following holds true for graph~$W$.

\begin{observation}
 \label{obs:W}
 Graph~$W$ is 3-colorable, 
 for every valid 3-coloring~$f\colon V(W)\to\CS$,
 it holds true that~$f(x)=f(x_0)=f(y_0)=f(y)$,
 and for every~$c\in\CS$,
 there is a valid 3-coloring with~$f(x)=c$.
\end{observation}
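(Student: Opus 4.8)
The plan is to follow the same two-step recipe used for \cref{obs:L} and \cref{lem:HSHsSs}: first exhibit one explicit valid 3-coloring of~$W$, and then argue that every valid 3-coloring is forced to agree on the four terminals~$x,x_0,y_0,y$.

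First I would establish 3-colorability together with the ``for every~$c$'' claim. A concrete valid 3-coloring of~$W$ is displayed in~\cref{fig:geteven}(a); reading it off shows that~$W$ is 3-colorable and that in the depicted coloring~$x,x_0,y_0,y$ all receive the same color. Since validity of a 3-coloring is preserved under any permutation~$\pi$ of the color set~$\CS$, composing the depicted coloring with a permutation that sends its color of~$x$ to an arbitrary~$c\in\CS$ yields, for every~$c$, a valid 3-coloring with~$f(x)=c$; moreover the equality of the four terminals is preserved, since a permutation maps an all-equal assignment to an all-equal assignment. This settles both 3-colorability and the last statement, and reduces the whole observation to the forcing property~$f(x)=f(x_0)=f(y_0)=f(y)$.

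The heart of the argument is thus to show this chain of equalities for \emph{every} valid 3-coloring. Here I would exploit that~$W$, being modelled on Dailey's gadget~\cite[Fig.~3]{Dailey80}, is assembled from ``equality'' subgraphs: the diamond~$K_4-e$ forces its two non-adjacent degree-two vertices to share a color (both are adjacent to the two endpoints of the remaining edge, which receive distinct colors, so both are pinned to the third color). I would locate a chain of such subgraphs inside~$W$ linking~$x$ to~$x_0$, $x_0$ to~$y_0$, and~$y_0$ to~$y$, and then conclude the full equality by transitivity. Wherever~$W$ departs from pure diamonds, the corresponding local equalities can be read off by the same kind of short color-propagation argument already used in the proof of~\cref{obs:L}.

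The main obstacle is this forcing direction: it is conceptually routine but requires tracing the internal structure of~$W$ exactly as drawn and verifying that each local gadget pins the intended pair of vertices to a common color. None of the individual checks is hard---each is a one- or two-step propagation like those in~\cref{obs:L}---but they must be carried out for the specific graph~$W$, which is precisely why the statement is phrased as ``not difficult to see'' and left to the figure rather than given a long formal proof.
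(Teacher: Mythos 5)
Your proposal is correct and takes essentially the paper's approach: the paper in fact offers no written proof of \cref{obs:W} at all, justifying it only by the valid 3-coloring depicted in \cref{fig:geteven}(a) and the remark that the claim is ``not difficult to see,'' and your plan (read the coloring off the figure, compose with color permutations for the ``every~$c\in\CS$'' part, then force $f(x)=f(x_0)=f(y_0)=f(y)$ via diamond subgraphs and short propagations) is exactly the natural completion of that. For the record, the forcing does go through as you sketch: in each half of~$W$ the diamonds pin the two labelled outer vertices to the color of that half's hub vertex, and the central vertex of~$W$, being adjacent to two differently-colored vertices in each half, ties the two hubs to a common color.
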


\begin{construction}
 \label{constr:geteven}
 Let~$(G,C)$ be an instance of~\tcAcr{}
 on 4-regular planar Hamiltonian graphs
 with~graph~$G=(V,E)$ and Hamiltonian cycle~$\HC$.
 Construct a graph~$G^*$ and a Hamiltonian cycle~$H^*$ of~$G^*$ as follows
 (see~\cref{fig:geteven} for an illustration).
 \begin{figure}[t]
  \centering
  \begin{tikzpicture}
   \def\xr{0.93}
   \def\yr{0.75}
   \def\xsh{1.66}
   \tikzpramble{}
   
   \newcommand{\exmex}[4]{%
    \node (#1u) at (0,0)[xnode,label=-#4:{\ensuremath{#2}}]{};
    \node (#1u1) at (-1*\xr,0*\yr)[xnode,label={\ensuremath{#2_1}}]{};
    \node (#1u2) at (-0.6*\xr,0.5*\yr)[xnode,label={\ensuremath{#2_2}}]{};
    \node (#1u3) at (-0.2*\xr,1*\yr)[xnode,label={\ensuremath{#2_3}}]{};
    
    \node (#1v) at (0*\xr,-3*\yr)[xnode,label=#4:{\ensuremath{#3}}]{};
    \node (#1v1) at (-1*\xr,-3*\yr)[xnode,label=-90:{\ensuremath{#3_1}}]{};
    \node (#1v2) at (-0.6*\xr,-3.5*\yr)[xnode,label=-90:{\ensuremath{#3_2}}]{};
    \node (#1v3) at (-0.2*\xr,-4*\yr)[xnode,label=-90:{\ensuremath{#3_3}}]{};
   }
   \newcommand{\exmaxA}[1]{%
    \draw[xedge] (#1u) to (#1v);
    \foreach\x in {1,...,3}{\draw[xedge] (#1u) to (#1u\x);\draw[xedge] (#1v) to (#1v\x);}
   }
   \newcommand{\theDX}[5]{%
    \node (#1x0) at (0,0) [xnode]{};
    \node (#1a1) at (-0.5*\xr,0.0*\yr)[xnode]{};
    \node (#1a2) at (-0.5*\xr,-1*\yr)[xnode]{};
    \node (#1a3) at (-1*\xr,-0.5*\yr)[xnode]{};
    \node (#1a4) at (-1.5*\xr,0*\yr)[xnode]{};
    \node (#1a5) at (-1.5*\xr,-1*\yr)[xnode]{};
    \node (#1aA) at (-2*\xr,0*\yr)[xnode,label=#2]{};
    \node (#1aB) at (-1*\xr,0.5*\yr)[xnode,label=#3]{};
    \foreach\x/\y in {#1aA/#1a4,#1aA/#1a5,#1aB/#1a4,#1aB/#1a1,#1a4/#1a5,#1a3/#1a5,#1a1/#1a2,#1a2/#1a5}{\draw[xedge] (\x) to (\y);}
    \foreach\x in {#1a4,#1a5,#1a2,#1a1}{\draw[xedge](#1a3) to (\x);}
    \node (#1b1) at (0.5*\xr,0*\yr)[xnode]{};
    \node (#1b2) at (0.5*\xr,-1*\yr)[xnode]{};
    \node (#1b3) at (1*\xr,-0.5*\yr)[xnode]{};
    \node (#1b4) at (1.5*\xr,0*\yr)[xnode]{};
    \node (#1b5) at (1.5*\xr,-1*\yr)[xnode]{};
    \node (#1bA) at (2*\xr,0*\yr)[xnode,label=#4]{};
    \node (#1bB) at (1*\xr,0.5*\yr)[xnode,label=#5]{};
    \foreach\x/\y in {#1bA/#1b4,#1bA/#1b5,#1bB/#1b4,#1bB/#1b1,#1b4/#1b5,#1b3/#1b5,#1b1/#1b2,#1b2/#1b5}{\draw[xedge] (\x) to (\y);}
    \foreach\x in {#1b4,#1b5,#1b2,#1b1}{\draw[xedge](#1b3) to (\x);}
    \foreach\x in {#1a1,#1a2,#1b2,#1b1}{\draw[xedge](#1x0) to (\x);}
   }

   \begin{scope}[xshift=4.375*\xr cm,yshift=4.5*\yr cm]
        \node at (-2.5*\xr,1*\yr)[]{(a)};
        \theDX{X}{180:$x$}{90:$x_0$}{0:$y$}{90:$y_0$};
        \node at (0*\xr,-1.75*\yr)[]{Graph~$W$};
        \foreach\x\y in {XaA/xnodeA,XaB/xnodeA,Xa3/xnodeA,Xx0/xnodeA,XbA/xnodeA,Xb3/xnodeA,XbB/xnodeA}
        {\node at (\x)[\y]{};}
        \foreach\x\y in {Xa1/xnodeB,Xa2/xnodeC,Xa4/xnodeC,Xa5/xnodeB}
        {\node at (\x)[\y]{};}
        \foreach\x\y in {Xb1/xnodeB,Xb2/xnodeC,Xb4/xnodeC,Xb5/xnodeB}
        {\node at (\x)[\y]{};}
   \end{scope}
   
   \newcommand{\Gshape}[6]{%
      \shade[left color=#5, right color=#6,rounded corners,draw=none] (-1.25*\xr,2*\yr) to (-0.5,2*\yr) to [out=#1,in=#2](0.25,0.*\yr) to (0.25*\xr,-3*\yr) to [out=#3,in=#4](-0.5,-5.*\yr) to (-1.25*\xr,-5*\yr);
   }
   \def\XgrayD{lightgray!25!white}
   \def\XgrayB{lightgray!5!white}

   \begin{scope}[xshift=0.25*\xr cm]
    \Gshape{0}{90}{-90}{0}{\XgrayB}{\XgrayD};
    \node at (-1.5*\xr,1.75*\yr)[]{(b)};
    \exmex{A}{u}{v}{135}
    \exmaxA{A};
    \draw[xpathx] (Au1) to (Au) to (Av) to (Av2);
    \node at (-0.75*\xr, -1.5*\yr)[]{$G$};
   \end{scope}
   
   \begin{scope}[xshift=1.75*\xr cm,x=-1 cm]
   \Gshape{180}{90}{-90}{180}{\XgrayD}{\XgrayB};
    \exmex{B}{u'}{v'}{45}
    \exmaxA{B};
    \draw[xpathy] (Bu1) to (Bu) to (Bv) to (Bv2);
    \node at (-0.75*\xr, -1.5*\yr)[]{$G'$};
   \end{scope}
   
   \begin{scope}[xshift=6*\xr cm]
    \Gshape{0}{90}{-90}{0}{\XgrayB}{\XgrayD};
    \node at (-1.5*\xr,1.75*\yr)[]{(c)};
    \exmex{A}{u}{v}{135}
    \node at (-0.75*\xr, -1.5*\yr)[]{$G$};
   \end{scope}
   
   \begin{scope}[xshift=10*\xr cm,x=-1 cm]
    \Gshape{180}{90}{-90}{180}{\XgrayD}{\XgrayB};
    \exmex{B}{u'}{v'}{45}
    \node at (-0.75*\xr, -1.5*\yr)[]{$G'$};
   \end{scope}
   
   \begin{scope}[xshift=8*\xr cm]
    \theDX{A}{}{90:$u_0$}{}{90:$u_0'$};
   \end{scope}
   \begin{scope}[xshift=8*\xr cm,yshift=-3*\yr cm,y=-1 cm]
    \theDX{B}{}{-90:$v_0$}{}{-90:$v_0'$};
   \end{scope}

   \foreach\x in{Au2,Au3}{\draw[xedge] (AaB) to (\x);}
   \foreach\x in{Av2,Av3}{\draw[xedge] (BaB) to (\x);}
   \draw[xedge] (AaA) to (Au1);
   \draw[xedge] (BaA) to (Av1);
   \draw[xedge] (AaA) to (BaA);
   \foreach\x in{Bu2,Bu3}{\draw[xedge] (AbB) to (\x);}
   \foreach\x in{Bv2,Bv3}{\draw[xedge] (BbB) to (\x);}
   \draw[xedge] (AbA) to (Bu1);
   \draw[xedge] (BbA) to (Bv1);
   \draw[xedge] (AbA) to (BbA);
   
   \draw[xpathx] (Av2) to (BaB);
   \draw[xpathx] (Au1) to (AaA);
   \draw[xpathy] (Bu1) to (AbA);
   \draw[xpathy] (Bv2) to (BbB);
   
   \draw[xpath] (Av2) to (BaB) to (Ba4) to (BaA) to (Ba5) to (Ba3) to (Ba2) to (Ba1) to  (Bx0) to (Bb2) to (Bb1) to (Bb3) to (Bb5) to (BbA) to (Bb4) to (BbB) to (Bv2);
   \draw[xpath] (Bu1) to (AbA) to (Ab5) to (Ab4) to (AbB) to (Ab1) to (Ab3) to (Ab2) to (Ax0) to (Aa2) to (Aa3) to (Aa1) to (AaB) to (Aa4) to (Aa5) to (AaA) to (Au1); 
  
  \end{tikzpicture}
  \caption{Illustration to the proof of \cref{prop:4even}.
  (a) the graph~$W$ with a valid 3-coloring,
  (b) excerpts of the graphs~$G$ and~$G'$ with Hamiltonian cycle~$\HC$ (magenta)
  and~$\HC'$ (green), and
  (c) excerpt of the graph~$G^*$ with Hamiltonian cycle~$\HC^*$ (blue).}
  \label{fig:geteven}
 \end{figure}
 Take~$G$ and a disjoint copy~$G'$ of~$G$.
 Let~$e=\{u,v\}$ be an edge in~$\HC$ of~$G$,
 and let~$e'=\{u',v'\}$ its copy in~$G'$.
 Let~$G$ be embedded such that
 such that~$e$ is incident with the outer face~\cite{balakrishnan2012textbook}.
 Let~$G'$ have the same embedding as~$G$ but mirrored along the $y$-axis
 (see~\cref{fig:geteven}(b); we assume such an embedding from now on).
 Connect the graphs~$G$ and~$G'$ through the vertices~$u,u',v,v'$ as follows
 (see~\cref{fig:geteven}(c)).
 
 Add a copy of the graph~$W$,
 and identify~$x$ with~$u$ (call the vertex again~$u$)
 and~$y$ with~$u'$ (call the vertex again~$u'$)
 and rename~$x_0$ by~$u_0$ and~$y_0$ by~$u_0'$.
 Let~$v,u_1,u_2,u_3$ be the neighbors of~$u$ in clockwise order.
 Remove the edges~$\{u,u_2\}$ and~$\{u,u_3\}$,
 and add the edges~$\{u_0,u_2\}$ and~$\{u_0,u_3\}$.
 Let~$v',u_1',u_2',u_3'$ be the neighbors of~$u'$ in counter-clockwise order.
 Remove the edges~$\{u',u_2'\}$ and~$\{u',u_3'\}$,
 and add the edges~$\{u_0',u_2'\}$ and~$\{u_0',u_3'\}$.
 
 Add another copy of the graph~$W$,
 and identify~$y$ with~$v$ (call the vertex again~$v$)
 and~$x$ with~$v'$ (call the vertex again~$v'$)
 and rename~$y_0$ by~$v_0$ and~$x_0$ by~$v_0'$.
 Let~$u,v_1,v_2,v_3$ be the neighbors of~$v$ in counter-clockwise order.
 Remove the edges~$\{v,v_2\}$ and~$\{v,v_3\}$,
 and add the edges~$\{v_0,v_2\}$ and~$\{v_0,v_3\}$.
 Let~$u',v_1',v_2',v_3'$ be the neighbors of~$v'$ in clockwise order.
 Remove the edges~$\{v',v_2'\}$ and~$\{v',v_3'\}$,
 and add the edges~$\{v_0',v_2'\}$ and~$\{v_0',v_3'\}$.
 
 We can merge~$\HC$ and~$\HC'$ through the two added~$W$'s as depicted in 
 \cref{fig:geteven}(c)
 to a Hamiltonian cycle~$\HC^*$ of~$G^*$.
 \cqed
\end{construction}

\begin{proof}[Proof of~\cref{prop:4even}]
 Let~$(G,C)$ be an instance of the \NP-hard (\cref{thm:tc:4regplanaHam})
 \tcAcr{}
 on 4-regular planar Hamiltonian graphs
 with graph~$G=(V,E)$ and Hamiltonian cycle~$\HC$.
 If~$G$ is even,
 then we return~$(G,\HC)$.
 Assume that~$G$ is odd.
 Construct the graph~$G^*$ with Hamilton cycle~$H^*$ from~$G$ using~\cref{constr:geteven}.
 Clearly,
 $G^*$ is Hamiltonian,
 4-regular, 
 planar,
 and even (note that~$|V(G^*)|=2|V(G)|+26$).
 Finally,
 due to~\cref{obs:W},
 we have that~$G$ is 3-colorable if and only if
 $G^*$ is 3-colorable.
 \lqed
\end{proof}

\subsection{5-regular planar Hamiltonian}
\label{ssec:5regplanarham}

In this section,
we prove that~\tcTsc{}
is also \NP-hard on 5-regular planar Hamiltonian graphs with provided Hamiltonian cycle.

\begin{theorem}%
 \label{thm:tc:5regplanham}
 \tcTsc{} on 5-regular planar Hamiltonian graphs is \NP-hard,
 even if a Hamiltonian cycle is given.
\end{theorem}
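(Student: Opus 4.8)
The plan is to reduce from the \NP-hard problem of \cref{prop:4even}, namely \tcTsc{} on \emph{even} 4-regular planar Hamiltonian graphs with a given Hamiltonian cycle. Let $(G,\HC)$ be such an instance with $\HC=(v_0,\dots,v_{n-1},v_0)$; since $G$ is even, $n$ is even, and I can partition its vertices into the $n/2$ consecutive pairs $\{v_{2i},v_{2i+1}\}$, each joined by an edge of $\HC$. The idea is to raise every degree from $4$ to $5$ by attaching, to each such pair, one copy of a fixed planar gadget $D$ so that each of $v_{2i},v_{2i+1}$ receives exactly one additional edge. No original edge is deleted, so $G$ remains an induced subgraph of the resulting graph $G^*$ on the vertex set $V(G)$.

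Concretely, the gadget $D$ should be a planar graph with two designated \emph{ports} $a,b$ such that (i)~$a$ and $b$ have degree $4$ in $D$ while all other vertices have degree $5$; (ii)~$D$ contains a Hamiltonian $a$--$b$ path; and (iii)~$D$ is colour-transparent, i.e.\ for every pair $(c_a,c_b)\in\CS^2$ there is a valid $3$-coloring of $D$ with $f(a)=c_a$ and $f(b)=c_b$. Such a gadget can be built along the lines of the $5$-regular planar construction of \citet{Dailey80}. To form $G^*$, for each pair $\{v_{2i},v_{2i+1}\}$ I add a fresh copy $D_i$ of $D$ together with the edges $\{v_{2i},a_i\}$ and $\{v_{2i+1},b_i\}$; then $a_i,b_i$ reach degree $5$ and every $v_j$ gains exactly one incident edge, so $G^*$ is $5$-regular. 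For planarity I place $D_i$ inside one of the two faces incident with the $\HC$-edge $\{v_{2i},v_{2i+1}\}$ (no $4$-regular graph has a bridge, by a handshake-parity argument, so two such faces exist) and route the two new edges into that face without crossings; since the chosen pairs are pairwise vertex-disjoint, all gadgets embed in pairwise disjoint disks, keeping $G^*$ planar. A Hamiltonian cycle $\HC^*$ of $G^*$ is obtained from $\HC$ by rerouting, for each $i$, the step $v_{2i}\to v_{2i+1}$ through $D_i$ as $v_{2i}\to a_i\to(\text{Hamiltonian }a_i\text{--}b_i\text{ path})\to b_i\to v_{2i+1}$ while keeping the inter-pair edges of $\HC$; this visits every old and new vertex exactly once, and the whole construction is polynomial.

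For correctness, note that $G$ is an induced subgraph of $G^*$, so any valid $3$-coloring of $G^*$ restricts to one of $G$. Conversely, given a valid $3$-coloring $f$ of $G$, I extend it to each $D_i$ using property~(iii): choose any $c_{a_i}\neq f(v_{2i})$ and $c_{b_i}\neq f(v_{2i+1})$ and color $D_i$ accordingly, which respects the new edges $\{v_{2i},a_i\}$ and $\{v_{2i+1},b_i\}$. Hence $G$ is $3$-colorable iff $G^*$ is, and $(G,\HC)\mapsto(G^*,\HC^*)$ is the desired reduction, giving \NP-hardness even with a Hamiltonian cycle provided. The main obstacle is designing $D$ so that (i)--(iii) hold simultaneously: reconciling $5$-regularity and planarity with exactly two degree-$4$ ports already constrains $D$ (in particular it forces $|V(D)|$ even and a carefully chosen outer boundary), and on top of that one must route a Hamiltonian $a$--$b$ path and verify full colour-flexibility, most conveniently by exhibiting explicit colorings as in \cref{obs:W}. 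Verifying that all $n/2$ gadgets embed simultaneously and that $\HC^*$ is indeed a single cycle are the remaining bookkeeping points.
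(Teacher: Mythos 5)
Your proposal is correct and takes essentially the same route as the paper: reduce from \cref{prop:4even}, attach one planar, colour-transparent gadget to each consecutive pair of Hamiltonian-cycle vertices so that every degree rises from $4$ to $5$, and reroute the given Hamiltonian cycle through a Hamiltonian port-to-port path of each gadget --- your variant of adding two edges to degree-$4$ ports produces the very same graph as the paper's $D$-insertion, which instead identifies two degree-$1$ ports $x,y$ of its gadget $D$ with the endpoints of a cycle edge. The gadget whose construction you defer is exactly the paper's graph $D$ (two joined, modified copies of Dailey's gadget, \cref{obs:D}); note also that transparency for \emph{distinct} port colours suffices, since in your extension step one may take $c_{a_i}=f(v_{2i+1})$ and $c_{b_i}=f(v_{2i})$, so your stronger requirement~(iii) is not actually needed.
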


\noindent
We will use multiple copies of the graph~$D$
(see~\cref{fig:dd})
to turn a planar 4-regular Hamiltonian graph into
a planar 5-regular Hamiltonian graph.
\begin{figure}[t]
 \centering
  \begin{tikzpicture}
    \def\xr{1.33}
    \def\yr{1.25}
    \tikzpramble{};
     \begin{scope}[xshift=0*\xr cm,yshift=0*\yr cm]
     \theDD{0}{0}
    \end{scope}
    
    \node at (Ax)[label=180:$x$]{};
    \node at (Bx)[label=180:$y$]{};
    
    \theDDpath{}
    \draw[xpath] (Ax) to (Aa1);
    \draw[xpath] (Bx) to (Ba1);
    
    \newcommand{\theDDcolor}[4]{%
      \foreach\x\y in{#1x/#2,#1a1/#3,#1b2/#2,#1a2/#4,#1a3/#3,#1b3/#4,#1a4/#2,#1b4/#2,#1c2/#3,#1b1/#3,#1c3/#4,#1b5/#3,#1b6/#2,#1a5/#4,#1a6/#3,#1b7/#3,#1c4/#4,#1b8/#2,#1c5/#3,#1a7/#4,#1b9/#3,#1d1/#2,#1c1/#4,#1b10/#2,#1a9/#4,#1a8/#2}
     {\node at (\x)[\y]{};}
    }
    \theDDcolor{A}{xnodeA}{xnodeB}{xnodeC}
    \theDDcolor{B}{xnodeC}{xnodeA}{xnodeB}
    \end{tikzpicture}
    \caption{The graph~$D$.
    A valid 3-coloring and an~$x$-$y$ Hamiltonian path (blue)
    are depicted.}
    \label{fig:dd}
\end{figure}
The graph~$D$ is inspired by a graph of~\citet[Fig.~4]{Dailey80}.
We have the following.

\begin{observation}
 \label{obs:D}
 Graph~$D$ is 3-colorable
 and
 for every distinct~$c,c'\in\CS$,
 there is a valid 3-coloring~$f\colon V(D)\to \CS$ with~$f(x)=c$ and~$f(y)=c'$.
\end{observation}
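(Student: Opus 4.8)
The plan is to combine one explicitly exhibited coloring with the color-permutation symmetry that all proper colorings enjoy. First I would take the valid 3-coloring~$f_0$ of~$D$ drawn in~\cref{fig:dd} and confirm that it is indeed valid, i.e.\ that the two endpoints of every edge receive different colors; this alone settles that~$D$ is 3-colorable. I would also read off from~\cref{fig:dd} that the two distinguished vertices~$x$ and~$y$ receive different colors under~$f_0$, so that~$f_0$ realizes one particular ordered pair~$(c_0,c_0')$ of distinct colors at~$(x,y)$.

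The second step is the symmetry argument, which is where all ordered pairs come from at once. For any permutation~$\pi$ of the color set~$\CS$, the composition~$\pi\circ f_0$ is again a valid 3-coloring of~$D$: since~$\pi$ is a bijection it preserves the relation~$f(u)\neq f(v)$ on every edge~$\{u,v\}$. Now given arbitrary distinct colors~$c,c'\in\CS$, I choose the permutation~$\pi\in S_3$ with~$\pi(c_0)=c$ and~$\pi(c_0')=c'$; such a~$\pi$ exists (and is unique, with the third color forced) precisely because~$c_0\neq c_0'$ and~$c\neq c'$. The coloring~$f\ceq\pi\circ f_0$ is then valid and satisfies~$f(x)=c$ and~$f(y)=c'$. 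Equivalently, since~$S_3$ acts simply transitively on the six ordered pairs of distinct colors, the single coloring~$f_0$ yields a witness for each of them, which is exactly what the statement asks for.

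The only genuine work, and hence the main obstacle, is verifying the first step: that the coloring drawn in~\cref{fig:dd} really is proper. Because~$D$ is assembled from two copies of a Dailey-type gadget joined by two edges together with the two attachment vertices~$x$ and~$y$, this is a finite but somewhat tedious edge-by-edge inspection of a moderately large planar graph. I would organize it by checking each gadget copy separately under the two color assignments indicated in the figure (the triples~$(\mathrm{A},\mathrm{B},\mathrm{C})$ and~$(\mathrm{C},\mathrm{A},\mathrm{B})$), and then verifying only the few edges where the two copies interact, namely the edges joining the two gadgets and the two edges incident to~$x$ and~$y$. Once this routine check is complete, the permutation argument of the second paragraph finishes the proof with no further computation.
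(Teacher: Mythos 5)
Your proposal is correct and matches the paper's (implicit) argument: the paper justifies \cref{obs:D} solely by the valid 3-coloring depicted in \cref{fig:dd}, in which $x$ and $y$ receive distinct colors, leaving the color-permutation step you spell out as the unstated routine completion. Making the $S_3$-action explicit is exactly the intended reasoning, so there is nothing to object to.
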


We will perform a series of~$D$-insertions defined as follows.

\begin{definition}[$D$-insertion]
 Let~$G$ be a graph and
 $\{v,u\}\in E(G)$.
 A~$D$-insertion 
 at~$\{v,u\}$
 results in the graph obtained from~$G$
 by 
 adding a copy of~$D$ to~$G$, 
 and identifying~$x$ with~$v$
 and~$y$ with~$u$.
\end{definition}

\begin{proof}[Proof of~\cref{thm:tc:5regplanham}]
 Let~$(G,\HC)$ be an instance 
 of the \NP-hard (\cref{prop:4even})
 \tcAcr{}
 on even 4-regular planar Hamiltonian graphs
 with graph~$G=(V,E)$
 and Hamiltonian cycle~$\HC=(v_1,\dots,v_n)$ of~$G$.
 Construct a graph~$G'$ with Hamiltonian cycle~$\HC'$ as follows
 (see~\cref{fig:dinsertion} for an illustration).
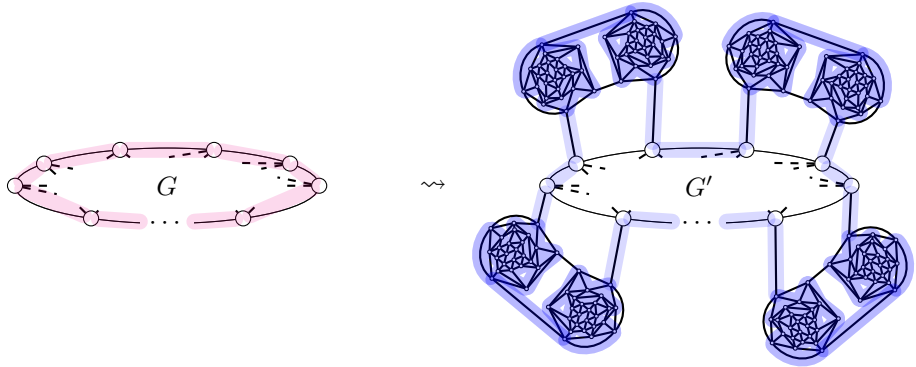
\begin{figure}[t]
  \centering
  \begin{tikzpicture}
   \def\xr{1}
   \def\yr{1}
   \tikzpramble{}
   
   \def\exO{2}
   \def\eyO{0.5}
   \begin{scope}[xshift=0cm,yshift=0cm]
    \theEllipse{\exO}{\eyO}{}
    \draw[xpathx] (cd) to (b2) to (a5) to (a4) to (a3) to (a2) to (a1) to (a0) to (b1) to (cd);
    \node at (0,0)[]{$G$};
    \node at (3.5*\xr,0*\yr)[]{$\leadsto$};
   \end{scope}
   
   \begin{scope}[xshift=7*\xr cm,yshift=0cm]
    \theEllipse{\exO}{\eyO}{}
    \node at (0,0)[]{$G'$};
   
   \begin{scope}[xshift=-1.9*\xr cm,yshift=1.4*\yr cm]
     \begin{scope}[scale=0.25,transform shape,rotate=20]
     \theDDx{0}{0}
     \theDDpath{A}{180}
     \theDDpath{B}{0}
     \draw[xedge] (a4) to (Aa1);
     \draw[xedge] (a3) to (Ba1);
     \draw[xpath] (a4) to (Aa1);
     \draw[xpath] (a3) to (Ba1);
     \end{scope}
   \end{scope}
   
   \begin{scope}[xshift=-1.45*\xr cm,yshift=-1.65*\yr cm,scale=0.25,transform shape,rotate=140]
     \theDDx{0}{0}
     \theDDpath{A}{180}
     \theDDpath{B}{0}
     \draw[xedge] (b2) to (Aa1);
     \draw[xedge] (a5) to (Ba1);
     \draw[xpath] (b2) to (Aa1);
     \draw[xpath] (a5) to (Ba1);
   \end{scope}

   \begin{scope}[xshift=0.8*\xr cm,yshift=1.7*\yr cm,scale=0.25,transform shape,rotate=340]
     \theDDx{0}{0}
     \theDDpath{A}{180}
     \theDDpath{B}{0}
     \draw[xedge] (a2) to (Aa1);
     \draw[xedge] (a1) to (Ba1);
     \draw[xpath] (a2) to (Aa1);
     \draw[xpath] (a1) to (Ba1);
   \end{scope}
   
   \begin{scope}[xshift=2.25*\xr cm,yshift=-1.*\yr cm,scale=0.25,transform shape,rotate=220]
     \theDDx{0}{0}
     \theDDpath{A}{180}
     \theDDpath{B}{0}
     \draw[xedge] (a0) to (Aa1);
     \draw[xedge] (b1) to (Ba1);
     \draw[xpath] (a0) to (Aa1);
     \draw[xpath] (b1) to (Ba1);
   \end{scope}
   
    \draw[xpath] (cd) to (b2);
    \draw[xpath] (a5) to (a4);
    \draw[xpath] (a3) to (a2);
    \draw[xpath] (a1) to (a0);
    \draw[xpath] (b1) to (cd);
   \end{scope}

  \end{tikzpicture}
  \caption{Illustration to the proof of~\cref{thm:tc:5regplanham}.
  The magenta path depicts the Hamiltonian cycle before the $D$-insertions,
  and the blue path depicts the Hamiltonian cycle after the~$D$-insertions.}
  \label{fig:dinsertion}
 \end{figure}
 For each~$i\in\set{n/2}$,
 make a~$D$-insertion at~$\{v_{2i-1},v_{2i}\}$.
 For~$\HC'$,
 replace each edge~$\{v_{2i-1},v_{2i}\}$ by the $x$-$y$ Hamiltonian path through the~$D$ inserted at~$\{v_{2i-1},v_{2i}\}$.
 Note that~$G'$ is 5-regular and planar.
 Finally,
 due to~\cref{obs:D},
 $G'$ is 3-colorable if and only if~$G$ is 3-colorable.
 \lqed
\end{proof}

\section{Regular Hamiltonian Graphs}
\label{sec:reg}

In this section,
we prove that \tcTsc{} remains \NP-hard
on $p$-regular Hamiltonian graphs for \emph{every}~$p\in\N_{\geq 4}$.

\begin{theorem}%
 \label{thm:tc:kreg}
 For every~$p\in\N_{\geq 4}$,
 \tcTsc{} on $p$-regular Hamiltonian graphs is \NP-hard,
 even if a Hamiltonian cycle is provided.
\end{theorem}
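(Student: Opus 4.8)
The cases $p=4$ and $p=5$ are exactly \cref{thm:tc:4regplanaHam} and \cref{thm:tc:5regplanham}, since a planar Hamiltonian graph is in particular Hamiltonian, so I assume $p\geq 6$. The plan is to start from a $q$-regular planar Hamiltonian hard instance whose base degree~$q$ has the same parity as~$p$, and to raise the degree of \emph{every} vertex uniformly to~$p$ by inserting a gadget on every edge of the provided Hamiltonian cycle, directly generalizing the $D$-insertion used for $p=5$. Concretely, if $p$ is even I reduce from \tcAcr{} on $4$-regular planar Hamiltonian graphs \tref{thm:tc:4regplanaHam}, and if $p$ is odd from $5$-regular planar Hamiltonian graphs \tref{thm:tc:5regplanham}. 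Writing $q\in\{4,5\}$ for the base degree chosen with $q\equiv p\pmod 2$, the per-vertex deficiency $p-q$ is even, say $p-q=2t$ with $t\geq 1$.

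The heart of the construction is a family of gadgets~$D_p$ generalizing the graph~$D$ \tref{fig:dd}. Each~$D_p$ has two distinguished terminals~$x,y$ of degree~$t$ inside~$D_p$, while every other vertex of~$D_p$ has degree exactly~$p$; moreover~$D_p$ is $3$-colorable and, as in \cref{obs:D}, for every two distinct $c,c'\in\CS$ admits a valid $3$-coloring~$f$ with $f(x)=c$ and $f(y)=c'$; finally~$D_p$ contains an $x$-$y$ Hamiltonian path, that is, a Hamiltonian path of~$D_p$ with endpoints~$x$ and~$y$. A $D_p$-insertion at an edge~$\{v,u\}$ adds a fresh copy of~$D_p$ and identifies~$x$ with~$v$ and~$y$ with~$u$ (keeping the edge~$\{v,u\}$, as for the $D$-insertion).

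Given a $q$-regular planar Hamiltonian instance~$(G,\HC)$ with~$\HC=(v_1,\dots,v_n)$, I would perform a $D_p$-insertion at every edge of~$\HC$. Each vertex~$v_i$ lies on exactly the two cycle edges~$\{v_{i-1},v_i\}$ and~$\{v_i,v_{i+1}\}$, hence becomes a terminal of exactly two inserted copies and its degree grows by~$2t$, reaching~$q+2t=p$; the interior gadget vertices have degree~$p$ by construction, so the resulting graph~$G'$ is $p$-regular. A Hamiltonian cycle~$\HC'$ of~$G'$ is obtained from~$\HC$ by replacing each cycle edge~$\{v_i,v_{i+1}\}$ with the $x$-$y$ Hamiltonian path of the copy of~$D_p$ inserted there; concatenating these paths visits every original vertex and every gadget-interior vertex exactly once. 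Finally, $G$ is $3$-colorable if and only if~$G'$ is: all original edges survive in~$G'$, so any $3$-coloring of~$G'$ restricts to one of~$G$; conversely, each edge of~$\HC$ is a genuine edge of~$G$, so its endpoints get distinct colors in any $3$-coloring of~$G$, and the color-pair flexibility of~$D_p$ then extends the coloring over each inserted copy.

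The main obstacle is constructing the gadget family~$D_p$ itself: unlike the single graph~$D$, I need one graph for every~$p\geq 6$ that is simultaneously $p$-regular off its two terminals, $3$-colorable with full freedom over the (distinct) terminal colors, and traversed by an $x$-$y$ Hamiltonian path. I would build~$D_p$ by wiring together copies of a small $3$-colorable regular block—for instance the complete tripartite octahedron~$K_{2,2,2}$, or complete bipartite layers~$K_{m,m}$ with~$m$ tuned so interior degrees equal~$p$ (adding a within-layer matching to fix parity when~$p$ is odd, which also keeps the interior count and hence the degree sequence graphical)—into a path-like arrangement whose two ends are opened into the degree-$t$ terminals~$x$ and~$y$, and then threading a Hamiltonian path through the layers from~$x$ to~$y$. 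Verifying the three properties uniformly in~$p$, in particular realizing all distinct terminal color pairs in the spirit of \cref{obs:D}, is the technical core; the parity split $q\in\{4,5\}$ together with the ``every edge of~$\HC$'' insertion is exactly what makes the uniform degree increase~$2t$ land on every target~$p\geq 4$.
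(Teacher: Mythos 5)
Your overall reduction framework (insert a degree-boosting, color-flexible, Hamiltonian-path-traversable gadget on every edge of the given Hamiltonian cycle) is sound \emph{conditional on the gadget family existing}, but the family~$D_p$ is exactly what you never construct, and it is the entire technical content of your proof. You need, uniformly for every~$p\geq 6$, a graph with two terminals of degree~$t=(p-q)/2$, all interior vertices of degree exactly~$p$, 3-colorability realizing every ordered pair of \emph{distinct} terminal colors, and an $x$-$y$ Hamiltonian path. These constraints interact nontrivially: for instance, your suggested building block of complete bipartite layers~$K_{m,m}$ with a within-layer matching added on both sides to fix parity is not even 3-colorable (each side then needs two colors, and the complete bipartite connection forces the four color classes to be pairwise distinct), so the ``parity fix'' for odd~$p$ breaks the coloring property; similarly, threading a Hamiltonian path and verifying terminal-color flexibility ``uniformly in~$p$'' is asserted, not proved. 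As written, the proposal reduces the theorem to an unproven gadget-existence claim.

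The paper avoids this difficulty entirely with a much simpler induction on~$p$: given a $p$-regular Hamiltonian instance~$(G,\HC)$ with~$\HC=(v_0,\dots,v_{n-1})$, it takes a disjoint copy~$G'$ of~$G$ and adds the edges~$\{v_i,v_{i+1\bmod n}'\}$ for all~$i$. Every vertex gains exactly one incident edge, so the result is $(p+1)$-regular; a Hamiltonian cycle of the new graph is obtained by stitching~$\HC$ and its copy together through two of the new cross edges; and 3-colorability is preserved because coloring each copy identically is valid: the cross edge~$\{v_i,v_{i+1}'\}$ inherits distinct colors from the cycle edge~$\{v_i,v_{i+1}\}$ of~$G$. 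This one-step degree increment needs no gadget at all, which is precisely the idea your approach is missing. If you want to salvage your route, you would have to actually exhibit~$D_p$ and verify its three properties for all~$p$; the paper's doubling construction shows this effort is unnecessary.
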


\begin{proof}
 We prove the statement via induction.
 Due to~\cref{thm:tc:5regplanham,thm:tc:4regplanaHam},
 we know that \tcAcr{} is \NP-hard on $p$-regular Hamiltonian graphs for~$p\in\{4,5\}$.
 Assume that the statement is true for~$p'\in\set[3,4]{p}$,
 $p\in\N_{\geq 4}$.
 Let~$(G,\HC)$ be an instance of
 the \NP-hard (by induction)
 \tcAcr{} on $p$-regular Hamiltonian graphs 
 with graph~$G=(V,E)$ and
 Hamiltonian cycle~$\HC=(v_0,\dots,v_{n-1})$ of~$G$,
 where~$n=|V|$.
 We construct a graph~$G^*$ from~$G$ as follows
 (see~\cref{fig:stacked}).
 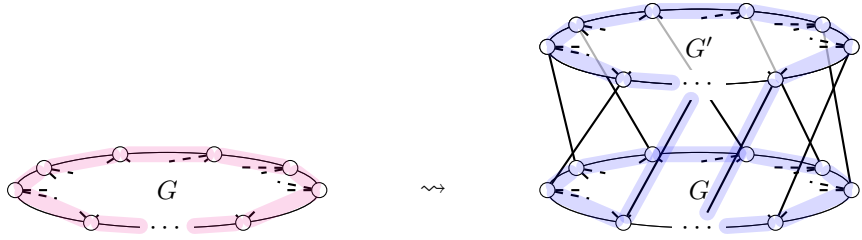
\begin{figure}[t]
  \centering
  \begin{tikzpicture}
    \def\xr{1}
    \def\yr{1}
    \tikzpramble{}
    \begin{scope}[xshift=-7*\xr cm]
      \theEllipse{2}{0.5}{A}
      \draw[xpathx] (Acd) to (Ab2) to (Aa5) to (Aa4) to (Aa3) to (Aa2) to (Aa1) to (Aa0) to (Ab1) to (Acd);
      \node at (0,0)[]{$G$};
      \node at (3.5*\xr,0*\yr)[]{$\leadsto$};
    \end{scope}
    \theEllipse{2}{0.5}{A}
    \node at (0,0)[]{$G$};
    \begin{scope}[yshift=1.9*\yr cm]
      \theEllipse{2}{0.5}{B}
    \end{scope}
    \foreach\x\y in {Ab2/Bcd,Acd/Bb1,Ab1/Ba0,Aa0/Ba1,Aa1/Ba2,Aa2/Ba3,Aa3/Ba4,Aa4/Ba5,Aa5/Bb2}{\draw[xedge] (\x) to (\y);}
    \begin{scope}[yshift=1.9*\yr cm]
      \theEllipse{2}{0.5}{B}
      \node at (0,0)[]{$G'$};
    \end{scope}
    \draw[xpath] (Ab2) to (Aa5) to (Aa4) to (Aa3) to (Aa2) to (Aa1) to (Aa0) to (Ab1) to (Acd);
    \draw[xpath] (Bcd) to (Bb2) to (Ba5) to (Ba4) to (Ba3) to (Ba2) to (Ba1) to (Ba0) to (Bb1);
    \draw[xpath] (Bcd) to (Ab2);
    \draw[xpath] (Acd) to (Bb1);
  \end{tikzpicture}
  \caption{Illustration to the proof of~\cref{thm:tc:kreg}.
  The input graph~$G$ (left-hand side) and the output graph~$G^*$ (right-hand side).
  For~$G$ and~$G^*$ are Hamiltonian cycles depicted in magenta and blue,
  respectively.}
  \label{fig:stacked}
  \end{figure}
 Add a disjoint copy~$G'=(V',E')$ of~$G$ to~$G$.
 Denote the vertices of~$G'$ by~$v_0',\dots,v_{n-1}'$,
 and the copy of~$\HC$ in~$G'$ by~$\HC'$.
 Finally,
 add the edge set~$\{\{v_{i},v_{i+1\bmod n}'\}\mid i\in\set[0]{n-1}\}$.
 Observe that~$G^*$ admits the following Hamiltonian cycle~$\HC^*=(v_{n-1},v_0',v_{n-1}',\dots,v_1',v_0,v_1,\dots,v_{n-2},v_{n-1})$.
 Clearly,~$G^*$ is~$(p+1)$-regular.
 Moreover,
 since~$G\subseteq G^*$, 
 every valid 3-coloring of~$G^*$ induces a valid 3-coloring of~$G$. 
 We claim that if~$G$ is 3-colorable, 
 then~$G^*$ is 3-colorable.
 
 Let~$f\colon V(G)\to\CS$ be a valid 3-coloring of~$G$.
 We claim that~$f^*\colon V(G^*)\to \CS$ 
 with~$f^*(v_i) = f(v_i)$ and~$f^*(v_i')=f(v_i)$
 for all~$i\in\set[0]{n-1}$ is a valid 3-coloring of~$G^*$.
 Clearly,
 for every~$Z\in\{V,V'\}$
 and for every~$\{v,w\}\in E(G^*)\cap \binom{Z}{2}$,
 we have~$f^*(v)\neq f^*(w)$.
 Moreover,
 for each~$i\in\set[0]{n-1}$,
 we have that~$f^*(v_i)\neq f^*(v_{(i+1)\bmod n}) = f^*(v_{(i+1)\bmod n}')$,
 and thus each edge~$\{v_i,v_{(i+1)\bmod n}'\}$ has differently colored endpoints.
 It follows that~$f^*$ is a valid 3-coloring of~$G^*$.
 \lqed
\end{proof}

\section{Ordered and Connected Hamiltonian Graphs}
\label{sec:ordcon}

In this section,
we prove~\tcTsc{} to remain \NP-hard on $p$-ordered Hamiltonian graphs
for every~$p\geq 3$.

\begin{theorem}
 \label{thm:tc:pordham}
 \tcTsc{} is \NP-hard on $p$-ordered regular Hamiltonian graphs
 for every~$p\in\N_{\geq 3}$,
 even if a Hamiltonian cycle is given.
\end{theorem}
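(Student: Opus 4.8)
The plan is to reduce from \tcAcr{} on $4$-regular Hamiltonian graphs, which is \NP-hard by \cref{thm:tc:kreg}, and to lift such an instance to a $p$-ordered one by a vertex blow-up. Given an instance $(G,\HC)$ with $G$ being $4$-regular and Hamiltonian, a Hamiltonian cycle $\HC=(v_0,\dots,v_{n-1})$, and the target order $p\geq 3$, I set $m=2p$ and construct $G^*$ (the lexicographic product $G[\overline{K_m}]$) by replacing each vertex $v_i$ with an independent set $B_i=\{v_i^1,\dots,v_i^m\}$ and adding, for every edge $\{v_i,v_j\}\in E(G)$, all $m^2$ edges between $B_i$ and $B_j$. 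Three structural facts are then immediate. First, $G^*$ is $4m$-regular, as each $v_i^a$ is adjacent exactly to the $m$ vertices in each of the four groups whose indices are neighbors of $v_i$ in $G$. Second, $G^*$ is $3$-colorable if and only if $G$ is: any $3$-coloring of $G$ extends by coloring all of $B_i$ with the color of $v_i$, and conversely the restriction of any $3$-coloring of $G^*$ to the representatives $\{v_i^1\}_i$ is a valid $3$-coloring of $G$, since $v_i^1v_j^1\in E(G^*)$ whenever $v_iv_j\in E(G)$.

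For Hamiltonicity I would use only the blow-up of $\HC$, namely the ``thick cycle'' $T$ on $B_0,\dots,B_{n-1}$ in which consecutive groups are completely joined, and provide the explicit Hamiltonian cycle $v_0^1,v_1^1,\dots,v_{n-1}^1,\,v_0^2,\dots,v_{n-1}^2,\,\dots,\,v_0^m,\dots,v_{n-1}^m,\,v_0^1$. Every consecutive pair lies in two groups that are adjacent along $\HC$, hence joined in $G^*$, and all $nm$ vertices appear exactly once; this cycle is handed over as part of the output.

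The heart of the argument --- and the step I expect to be the main obstacle --- is that $G^*$ is $p$-ordered. Since adding edges preserves $p$-orderedness and $T\subseteq G^*$, it suffices to prove it for $T$. Fix a $p$-tuple of distinct vertices $(w_1,\dots,w_p)$ with $w_k=v_{i_k}^{a_k}$. I would build a \emph{cyclic linkage}: pairwise internally disjoint paths $P_k$ from $w_k$ to $w_{k+1}$ (indices mod $p$) whose interiors avoid every $w_k$, so that $w_1P_1w_2P_2\cdots w_pP_pw_1$ is one cycle meeting $w_1,\dots,w_p$ in exactly this cyclic order. Because $m=2p$, I can choose pairwise distinct ``lanes'' $\ell_1,\dots,\ell_p\in\{1,\dots,m\}$ disjoint from the at most $p$ occupied lanes $a_1,\dots,a_p$, and route $P_k$ clockwise along the group-cycle $C_n$ from $B_{i_k}$ to $B_{i_{k+1}}$ using the single vertex $v_i^{\ell_k}$ in each intermediate group $B_i$ (with a length-two hop through the next group when $i_k=i_{k+1}$). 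Each such path is valid because consecutive groups are completely joined; it visits each group at most once since it makes less than one full turn; and distinct lanes together with $\ell_k\notin\{a_1,\dots,a_p\}$ guarantee that the interiors of different paths are disjoint from one another and from all $w_k$. Concatenating yields the desired cycle, so $T$, and hence $G^*$, is $p$-ordered; note this needs $n\geq 3$, which holds as $G$ is Hamiltonian.

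Since $m=2p$ is constant for fixed $p$, the reduction runs in polynomial time, establishing \NP-hardness of \tcAcr{} on $4m$-regular (hence regular) Hamiltonian graphs with a Hamiltonian cycle provided. The only delicate point is the disjointness bookkeeping of the cyclic linkage sketched above; the regularity, colorability, and Hamiltonicity claims are routine. As $p$-ordered graphs are $(p-1)$-connected~\cite{NgS97}, the very same instances additionally witness \NP-hardness on regular $(p-1)$-connected Hamiltonian graphs.
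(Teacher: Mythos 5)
Your proposal is correct, and although your gadget is in fact the same as the paper's, the way you establish $p$-orderedness is genuinely different. The paper's $q$-complete graph $C_q(G)$ of \cref{constr:qcompl} is precisely your independent-set blow-up $G[\overline{K_q}]$, and its Hamiltonian cycle (\cref{lem:compl:ham}) and 3-colorability equivalence (\cref{obs:qcompl:3col}) match your ``thick cycle'' and coloring arguments. The difference is the key step: the paper proceeds by induction on $p$, blowing up a $p$-ordered instance with $q=\lceil 22(p+1)/(p-1)\rceil$ copies, proving that $C_q(G)$ is $q(p-1)$-connected (\cref{lem:tc:qconn}), and then invoking the theorem of Bollob\'as--Thomason/Faudree (\cref{fact:conn2ord}) that $22p$-connectivity implies $p$-orderedness. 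You instead reduce in a single step from the 4-regular case (\cref{thm:tc:kreg}) with only $m=2p$ copies and prove $p$-orderedness of the spanning blown-up Hamiltonian cycle $T$ directly, by routing through pairwise distinct free ``lanes''; your bookkeeping is sound (lanes $\ell_k$ pairwise distinct and disjoint from the occupied lanes $a_1,\dots,a_p$, each path winding less than a full turn so it meets each group at most once, and the length-two hop handling $i_k=i_{k+1}$), and the reduction from $T$ to $G^*$ is legitimate because $T$ is a \emph{spanning} subgraph, so every cycle of $T$ witnessing an ordering is a cycle of $G^*$. What your route buys: it is elementary and self-contained (no appeal to \cref{fact:conn2ord}, no induction), and the size overhead is a single factor $2p$ rather than a product of factors, each at least $22$, accumulated along the inductive chain. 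What the paper's route buys: all routing is outsourced to a known theorem, so no case analysis is needed, and the constructed graphs have very high connectivity ($\geq 22(p+1)$) as a byproduct. Both constructions yield regular Hamiltonian graphs with an explicitly computable Hamiltonian cycle, and both give the \NP-hardness on regular $(p-1)$-connected Hamiltonian graphs via the fact that $p$-ordered graphs are $(p-1)$-connected~\cite{NgS97}.
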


Every Hamiltonian graph is 3-Hamiltonian-ordered 
and hence 3-ordered Hamiltonian.
Starting from here,
we will inductively prove~\cref{thm:tc:pordham}.
In each inductive step,
we will construct a graph of connectivity high enough and employ the following.

\begin{fact}[\cite{BollobasT96,FaudreeF02}]
 \label{fact:conn2ord}
 If~$G$ is~$22p$-connected graph for some~$p\in\N_{\geq 3}$,
 then~$G$ is $p$-ordered.
\end{fact}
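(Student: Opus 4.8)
The plan is to deduce the Fact from the linkage theorem of Bollob\'as and Thomason \cite{BollobasT96}, which asserts that every $22p$-connected graph is $p$-\emph{linked}: for every choice of $2p$ distinct vertices arranged into pairs $(s_1,t_1),\dots,(s_p,t_p)$ there exist pairwise vertex-disjoint paths $P_1,\dots,P_p$ with $P_i$ joining $s_i$ to $t_i$. It therefore suffices to show that a $p$-linked graph of minimum degree at least $2p$ is $p$-ordered; since a $22p$-connected graph has minimum degree at least $22p\geq 2p$, this yields the Fact. Fix a $p$-tuple $(v_1,\dots,v_p)$ of distinct vertices; I must produce a cycle meeting them in the prescribed cyclic order.

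The one conceptual obstacle is that such a cycle uses each $v_i$ as a \emph{shared} endpoint of two consecutive segments, whereas a linkage only supplies \emph{disjoint} paths. First I would sidestep this by passing to private neighbours: greedily pick, for each $i$, a neighbour $u_i$ of $v_i$ so that the $2p$ vertices $v_1,\dots,v_p,u_1,\dots,u_p$ are pairwise distinct. This is possible because each $v_i$ has at least $22p$ neighbours while at most $2p-1$ vertices (the $v_j$ and the previously chosen $u_j$) are forbidden at each step.

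Next I would invoke $p$-linkedness for the pairs $(u_i,v_{i+1})$, with indices modulo $p$, obtaining pairwise vertex-disjoint paths $P_1,\dots,P_p$ where $P_i$ runs from $u_i$ to $v_{i+1}$. The key point is to perform the linkage \emph{inside $G$ itself}, rather than in $G-\{v_1,\dots,v_p\}$ (which would cost up to $p$ in connectivity and break the bound $22p$): since each $v_j$ and each $u_j$ is a terminal of some pair, disjointness alone guarantees that no path $P_i$ passes through any $v_j$ with $j\neq i+1$ or any $u_j$ with $j\neq i$. Concatenating the paths with the edges $v_iu_i$ then yields the closed walk $v_1,u_1,P_1,v_2,u_2,P_2,\dots,v_p,u_p,P_p,v_1$. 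It is a genuine cycle because the $P_i$ are pairwise disjoint and their $2p$ endpoints are all distinct, and it visits $v_1,\dots,v_p$ in the required cyclic order. Hence $G$ is $p$-ordered.

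I expect the only real subtlety to be exactly this bookkeeping that keeps the linkage inside $G$ without deleting the specified vertices, as that is what preserves the connectivity constant $22p$ of \cite{BollobasT96} and avoids the naive loss incurred by working in $G-\{v_1,\dots,v_p\}$. The reduction from ordered cycles to linkages via private neighbours is the form in which the bound is recorded in \cite{FaudreeF02}, and the modest degree condition needed to choose the $u_i$ is automatically met under $22p$-connectivity, so no extra hypotheses are required.
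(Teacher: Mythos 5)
Your proof is correct and matches the intended derivation: the paper states this as a Fact without proof, citing exactly the two ingredients you use --- the Bollob\'as--Thomason linkage theorem \cite{BollobasT96} and the private-neighbour reduction from $p$-ordered to $p$-linked recorded in \cite{FaudreeF02}. Your key bookkeeping point is also sound: since every $v_j$ and every $u_j$ is itself a terminal of the linkage, pairwise disjointness of the paths already keeps each $P_i$ away from all other specified vertices, so no connectivity is lost by deleting vertices.
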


\begin{construction}[$q$-complete graph]
 \label{constr:qcompl}
 Let~$G=(V,E)$ and let~$q\in\N$.
 Construct the \emph{$q$-complete} graph~$C_q(G)=(V_q,E_q)$ as follows
 (see~\cref{fig:complete} for an illustration).
 \begin{figure}
  \centering
  \begin{tikzpicture}
    \def\xr{1}
    \def\yr{1}
    \tikzpramble{}
    
    \begin{scope}[rotate=0]
    \begin{scope}[xshift=-3*\yr cm,yshift=5*\yr cm,rotate=45]
      \theEllipse{2}{0.5}{A}
      \node at (0,0)[]{$G^1$};
    \end{scope}
    \begin{scope}[xshift=3*\yr cm,yshift=5*\yr cm,rotate=-45]
      \theEllipse{2}{0.5}{B}
      \node at (0,0)[]{$G^2$};
    \end{scope}
    
    \newcommand{\supcon}[3]{
      \foreach\x\y in {#1b2/#2cd,#1cd/#2b1,#1b1/#2a0,#1a0/#2a1,#1a1/#2a2,#1a2/#2a3,#1a3/#2a4,#1a4/#2a5,#1a5/#2b2,
      #2b2/#1cd,#2cd/#1b1,#2b1/#1a0,#2a0/#1a1,#2a1/#1a2,#2a2/#1a3,#2a3/#1a4,#2a4/#1a5,#2a5/#1b2}{\draw[#3] (\x) to (\y);}
    }
    \begin{scope}[yshift=1*\yr cm]
      \theEllipse{2}{0.5}{C}
      \node at (0,0)[]{$G^3$};
    \end{scope}
    
    \supcon{A}{B}{orange}
    \supcon{A}{C}{green}
    \supcon{B}{C}{gray}
    
    \newcommand{\aRound}[1]{
    \draw[xpath] (#1b2) to (#1a5) to (#1a4) to (#1a3) to (#1a2) to (#1a1) to (#1a0) to (#1b1) to (#1cd);
    }
    \aRound{A}
    \aRound{B}
    \aRound{C}
    \draw[xpath] (Acd) to (Bb2);
    \draw[xpath] (Bcd) to (Cb2);
    \draw[xpath] (Ccd) to (Ab2);
    \end{scope}
  \end{tikzpicture}
  \caption{Illustration of~$C_q(G)$ with~$q=3$. $G^1,G^2,G^3$ denote the three copies of~$G$, 
  illustrated through a Hamiltonian cycle of~$G$. 
  Orange edges connect vertices from~$G^1$ and~$G^2$,
  green edges connect vertices from~$G^1$ and~$G^3$,
  and gray edges connect vertices from~$G^2$ and~$G^3$.
  Only adjacencies along the Hamiltonian cycle are depicted
  (not all edges are shown).
  Moreover,
  a Hamiltonian cycle in~$C_3(G)$ is depicted (blue).}
  \label{fig:complete}
  \end{figure}
 Let~$V_q\ceq V^1\uplus \dots\uplus V^q$
 where~$V^i\ceq \{v^i\mid v\in V\}$ for all~$i\in\set{q}$.
 Let~$E_q\ceq \bigcup_{i,j\in\set{q}}\{\{u^i,v^j\}\mid \{u,v\}\in E\}$.
 \cqed
\end{construction}

By construction,
we have the following.

\begin{observation}
 \label{obs:qcompl:3col}
 Let~$q\in\N$.
 Then,
 $G$ is 3-colorable if and only if~$C_q(G)$ is 3-colorable.
\end{observation}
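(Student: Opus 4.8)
The plan is to verify both implications directly from~\cref{constr:qcompl}, relying on two elementary structural observations about~$E_q$. First, for every~$i\in\set{q}$ the subgraph of~$C_q(G)$ induced by~$V^i$ is isomorphic to~$G$ under the bijection~$v^i\mapsto v$, because the edges inside~$V^i$ are exactly those contributed by the index pair~$(i,i)$ in the definition of~$E_q$. Second, every edge of~$E_q$ has the form~$\{u^i,v^j\}$ for some~$\{u,v\}\in E$; in particular, the projection of any edge of~$C_q(G)$ to~$G$ (forgetting the layer) is again an edge of~$G$.

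For the direction asserting that 3-colorability of~$G$ implies 3-colorability of~$C_q(G)$, I would take a valid 3-coloring~$f\colon V\to\CS$ and define~$f_q\colon V_q\to\CS$ by~$f_q(v^i)\ceq f(v)$ for all~$v\in V$ and~$i\in\set{q}$, i.e.\ I color every copy of a vertex with that vertex's original color. To check validity, I would take an arbitrary edge~$\{u^i,v^j\}\in E_q$; by the second observation~$\{u,v\}\in E$, whence~$f_q(u^i)=f(u)\neq f(v)=f_q(v^j)$.

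For the converse, I would invoke the first observation: given any valid 3-coloring~$g\colon V_q\to\CS$ of~$C_q(G)$, its restriction to~$V^1$ is a valid 3-coloring of the induced subgraph~$C_q(G)[V^1]$, which is isomorphic to~$G$, so transporting it back along~$v\mapsto v^1$ produces a valid 3-coloring of~$G$.

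There is essentially no hard step here, which is why the statement is phrased as an observation: the construction replicates both the graph and any coloring across the~$q$ layers. The only point deserving (minimal) care is reading off~$E_q$ correctly from~\cref{constr:qcompl} — specifically, confirming that each layer~$V^i$ induces a faithful copy of~$G$ and that the cross-layer edges (those with~$i\neq j$) never create an adjacency between copies of vertices that are non-adjacent in~$G$. Once this is in place, both implications are immediate.
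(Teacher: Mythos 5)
Your proof is correct and matches what the paper intends: the paper states this as an observation following immediately ``by construction,'' with no written proof, and your argument---copying a valid 3-coloring of~$G$ across all~$q$ layers for one direction, and restricting a valid 3-coloring of~$C_q(G)$ to the induced copy of~$G$ on~$V^1$ for the other---is precisely the immediate argument being invoked. Both of your structural observations about~$E_q$ (each layer induces a faithful copy of~$G$, and every edge projects to an edge of~$G$) are accurate readings of \cref{constr:qcompl}, so there is no gap.
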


Moreover,
the~$q$-complete graph preserves
Hamiltonicity.

\begin{lemma}
 \label{lem:compl:ham}
 Let~$q\in\N$.
 If~$G$ admits a Hamiltonian cycle,
 then
 $C_q(G)$ admits a Hamiltonian cycle computable in polynomial time.
\end{lemma}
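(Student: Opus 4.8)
The plan is to build a Hamiltonian cycle of $C_q(G)$ by concatenating, for each copy $G^i$, a spanning path of $V^i$ and linking consecutive copies by cross-copy edges supplied by \cref{constr:qcompl}. Write the given Hamiltonian cycle of $G$ as $\HC=(v_0,v_1,\dots,v_{n-1},v_0)$ with $n=|V|$, so that $\{v_j,v_{j+1\bmod n}\}\in E$ for every $j\in\set[0]{n-1}$. The key observation is that deleting the single ``wrap-around'' edge $\{v_{n-1},v_0\}$ from the Hamiltonian cycle inside each copy turns it into a spanning path $P^i=v_0^i,v_1^i,\dots,v_{n-1}^i$ of $V^i$, and that this deleted edge $\{v_{n-1},v_0\}$ is itself an edge of $G$; hence, by the definition of $E_q$, the edge $\{v_{n-1}^i,v_0^{j}\}$ is present in $C_q(G)$ for every pair of copies $i,j$, and in particular for consecutive copies.

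First I would write down the candidate cycle explicitly, namely $\HC'=(v_0^1,\dots,v_{n-1}^1,\,v_0^2,\dots,v_{n-1}^2,\,\dots,\,v_0^q,\dots,v_{n-1}^q,\,v_0^1)$, which traverses $P^1,P^2,\dots,P^q$ in turn and returns to its start. Then I would verify that every consecutive pair along $\HC'$ is adjacent in $C_q(G)$: the intra-copy steps $\{v_j^i,v_{j+1}^i\}$ for $j\in\set[0]{n-2}$ exist because $\{v_j,v_{j+1}\}\in E$ and they join copy $i$ to itself, while the inter-copy steps $\{v_{n-1}^i,v_0^{i+1}\}$ for $i\in\set{q-1}$ together with the closing step $\{v_{n-1}^q,v_0^1\}$ all exist because $\{v_{n-1},v_0\}\in E$ joins one copy to the next. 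Since each vertex $v_j^i$ occurs exactly once along $\HC'$ and $|V_q|=qn$, the walk $\HC'$ is a Hamiltonian cycle of $C_q(G)$.

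Finally I would note that $\HC'$ is determined by $\HC$ and is produced in a single pass over the $qn$ vertices, so it is computable in polynomial time, as claimed. There is no real obstacle here; the only point worth flagging is the recognition that the closing edge of $\HC$ doubles as the connector between copies, so that no edges of $G$ beyond those already lying on $\HC$ are needed to route the tour between the $q$ copies.
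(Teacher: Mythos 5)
Your proof is correct and uses exactly the same construction as the paper: the cycle $(v_0^1,\dots,v_{n-1}^1,v_0^2,\dots,v_{n-1}^2,\dots,v_{n-1}^q,v_0^1)$ that traverses each copy along the original Hamiltonian order and crosses between consecutive copies via the edge inherited from $\{v_{n-1},v_0\}\in E$. The paper simply states this cycle without the explicit edge-by-edge verification; your added checks are fine but not a different argument.
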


\begin{proof}
 Let~$\HC=(v_0,\dots,v_{n-1},v_0)$ be a Hamiltonian cycle of~$G$.
 Then
 \[\HC_q\ceq (v_0^1,\dots,v_{n-1}^1,v_0^2,\dots,v_{n-1}^2,v_0^3,\dots,\dots,v_{n-1}^{q},v_0^1)\]
 is a Hamiltonian cycle in~$C_q(G)$.
 \lqed
\end{proof}

Now we argue about the connectivity.

\begin{lemma}
 \label{lem:tc:qconn}
 Let~$q\in\N$.
 If~$G$ is~$p$-ordered for some~$p\in\N_{\geq 3}$,
 then~$C_q(G)$ is~$q(p-1)$-connected.
\end{lemma}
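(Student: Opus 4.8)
The plan is to deduce the connectivity of $C_q(G)$ from that of $G$. First I would record that $p$-orderedness forces $G$ to be $(p-1)$-connected. Indeed, take any $a,b\in V$ and any $S\subseteq V\setminus\{a,b\}$ with $|S|\le p-2$; since $G$ is $p$-ordered we have $|V|\ge p$, so $S$ extends to a set $S'$ of exactly $p-2$ vertices avoiding $a,b$. Applying $p$-orderedness to the tuple $(a,b,s_1',\dots,s_{p-2}')$ yields a cycle on which the arc running directly from $a$ to $b$ meets none of the $s_i'$, hence is an $a$-$b$ path in $G-S'\subseteq G-S$. Thus $G-S$ is connected for every such $S$, i.e.\ $\kappa(G)\ge p-1$ (and so $|V|\ge p$).

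The structural observation driving the rest is that $C_q(G)$ is a blow-up of $G$: by \cref{constr:qcompl}, $u^i$ and $v^j$ are adjacent precisely when $\{u,v\}\in E$, independent of $i,j$. Hence the copies $v^1,\dots,v^q$ of a fixed $v$ form an independent set with the common neighborhood $\{w^j\mid w\in N_G(v),\,j\in\set{q}\}$, while any two copies of $G$-adjacent vertices are adjacent. I would then prove $\kappa(C_q(G))\ge q\,\kappa(G)\ge q(p-1)$ directly from the definition of connectivity.

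To this end, fix $S\subseteq V_q$ with $|S|\le q(p-1)-1$ and let $T\ceq\{v\in V\mid\{v^1,\dots,v^q\}\subseteq S\}$ be the set of original vertices all of whose copies are deleted. Each $v\in T$ contributes $q$ deleted vertices, so $q|T|\le|S|<q(p-1)$ gives $|T|\le p-2<\kappa(G)$; therefore $G-T$ is connected. Since $|V|\ge p$ and $|T|\le p-2$, we also have $|V\setminus T|\ge 2$, so $G-T$ has no isolated vertex. I now argue that $C_q(G)-S$ is connected by merging components along $G-T$: for every edge $\{u,v\}$ of $G-T$ any surviving copy $u^i$ is adjacent to any surviving copy $v^j$, so the surviving copies of $u$ and of $v$ share one component; propagating this over the connected graph $G-T$ links the copies of all vertices in $V\setminus T$.

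The one genuinely delicate point — the main obstacle — is that distinct copies $v^i,v^{i'}$ of the \emph{same} surviving vertex are non-adjacent, so I must still show they land in a common component. This is exactly where ``no isolated vertex'' is used: $v$ has a neighbor $c\in V\setminus T$ in the connected graph $G-T$, this $c$ has a surviving copy $c^k$ (as $c\notin T$), and $c^k$ is adjacent to \emph{every} copy of $v$; hence all surviving copies of $v$ attach to the single vertex $c^k$ and lie in one component. Combining this with the edge-wise merging above shows $C_q(G)-S$ is connected. As $S$ was an arbitrary deletion set of size less than $q(p-1)$ and $|V_q|=q|V|>q(p-1)$, we conclude that $C_q(G)$ is $q(p-1)$-connected.
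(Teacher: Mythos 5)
Your proof is correct, but it proceeds quite differently from the paper's. The paper argues via disjoint paths: it cites \cite{NgS97} for the implication that a $p$-ordered graph is $(p-1)$-connected, then, for two vertices $v^x$ and $w^y$ of~$C_q(G)$, takes $p-1$ internally vertex-disjoint $v$-$w$ paths in~$G$, places a copy of each path in each of the~$q$ layers, reroutes the endpoints of all these copies to~$v^x$ and~$w^y$, and concludes from the resulting $q(p-1)$ internally disjoint paths, via Menger's theorem~\cite{menger1927}, that $C_q(G)$ is $q(p-1)$-connected. You instead run a vertex-deletion argument: a set~$S$ with $|S|<q(p-1)$ can wipe out all~$q$ copies of at most $p-2$ original vertices (your set~$T$), so $G-T$ stays connected, and the blow-up structure of \cref{constr:qcompl} (adjacency of $u^i,v^j$ being independent of $i,j$) lets you merge the surviving copies along the edges of $G-T$. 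Both routes are sound, but yours buys two things: it is self-contained, since you also prove the orderedness-to-connectivity implication from the definition (the cycle-arc argument) rather than citing it; and it explicitly covers the pair $v^x,v^y$ of two copies of the \emph{same} vertex --- your ``delicate point''. The paper's path construction tacitly assumes $v\neq w$ and never addresses that pair (for $v=w$ one would instead take the $q\deg_G(v)\geq q(p-1)$ length-two paths through the copies of the neighbors of~$v$), so on this point your version is the more careful one; in exchange, the paper's argument is shorter and produces an explicit system of disjoint paths. One small caveat on your side: the inference ``$p$-ordered $\Rightarrow |V|\geq p$'' rests on the standard convention that the defining condition ranges over $p$-tuples of distinct vertices and is required non-vacuously; under a strictly vacuous reading, a graph on fewer than~$p$ vertices would be trivially $p$-ordered and the claim could fail. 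The same convention is implicit in the fact the paper cites, so this is a shared convention, not a gap in your argument.
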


\begin{proof}
 Let~$G$ be~$p$-ordered for some~$p\in\N_{\geq 3}$.
 Let~$v,w\in V$ be two arbitrary vertices of~$G$
 and let~$x,y\in\set{q}$ be arbitrary.
 Since~$G$ is~$p$-ordered,
 $G$ is~$(p-1)$-connected~\cite{NgS97},
 and thus there are~$p-1$ mutually internally vertex-disjoint $v$-$w$~paths~$P_1,\dots,P_{p-1}$.
 For all~$i\in\set{p-1}$ and~$j\in\set{q}$,
 let~$P_i^j$ denote the copy of path~$P_i$ in~$G^j=(V^J,E^j)$.
 In each path~$P_i^j$,
 replace~$v^j$ with~$v^x$ and~$w^j$ with~$w^y$.
 By this,
 we obtain~$q(p-1)$ mutually internally vertex-disjoint~$v^x$-$w^x$-paths.
 It follows that~$C_p(G)$ is $q(p-1)$-connected~\cite{menger1927}.
 \lqed
\end{proof}

We are set to prove the main result of this section.

\begin{proof}[Proof of~\cref{thm:tc:pordham}]
  We prove the statement via induction on~$p\in\N_{\geq 3}$.
  We know that \tcAcr{} is \NP-hard on 3-ordered regular Hamiltonian graphs
  with a Hamiltonian cycle provided.
  Let the statement hold true for all~$p'\in\set{p}$,
  $p\in\N_{\geq 3}$.
  Let~$(G,\HC)$ be an instance of the \NP-hard 
  (by induction) 
  \tcAcr{} on~$p$-ordered regular Hamiltonian graphs
  with Hamiltonian cycle~$\HC$.
  Let~$q\ceq \ceil{22(p+1)/(p-1)}$.
  Compute graph~$G'\ceq C_q(G)$ using~\cref{constr:qcompl} in polynomial time.
  Clearly,
  $G'$ is regular.
  Due to~\cref{obs:qcompl:3col},
  we know that~$G$ is 3-colorable if and only if~$G'$ is 3-colorable.
  Due to~\cref{lem:compl:ham},
  $G'$ is Hamiltonian and we can compute a Hamiltonian cycle~$\HC'$ of~$G'$ in polynomial time.
  Finally,
  due to~\cref{lem:tc:qconn},
  $G'$ is~$q(p-1)\geq 22(p+1)$-connected,
  and hence due to~\cref{fact:conn2ord},
  $G'$ is~$(p+1)$-ordered.
  \lqed
\end{proof}

\section{Conclusion}

\tcTsc{} remains \NP-hard
when requiring a Hamiltonian input graph and a witnessing Hamiltonian cycle in the input,
even if the input graph is already restricted to 4- or 5-regular planar,
to $p$-regular for any~$p\geq 6$,
or to $p$-ordered regular for any~$p\geq 3$ graphs.
We close with the following.

\begin{question}
 Is~\tcTsc{} \NP-hard on~$p$-Hamiltonian-ordered graphs for every~$p\geq 4$?
\end{question}

{
\begingroup
  \renewcommand{\url}[1]{\href{#1}{$\ExternalLink$}}
  \newcommand*{\doi}[1]{\href{http://dx.doi.org/#1}{$\ExternalLink$}}
  \bibliography{../../probs-on-hams-bib}
\endgroup
}

\end{document}